\newtheorem{theorem}{Theorem}[section]
\newtheorem{prop}{Proposition}[section]
\newtheorem{lemma}{Lemma}[section]
\newtheorem{corollary}{Corollary}[section]
\begin{document}

\title{Generalized Entropies}


\author{F. Dupuis$^{1,2*}$
\quad\quad
L. Krämer$^{1**}$
\quad\quad
P. Faist$^{1**}$
\quad\quad
J. M. Renes$^{1**}$
\quad\quad
R. Renner$^{1**}$\\[4mm]
{\it\small $^{1}$Institute for Theoretical Physics, ETH Zurich, Switzerland}\\[-1mm]
{\it\small $^{2}$Department of Computer Science, Aarhus University, Denmark}\\[-1mm]
{\tt\small $^{*}$dupuis@cs.au.dk}\\[-1mm]
{\tt\small $^{**}$\{lkraemer,pfaist,renes,renner\}@phys.ethz.ch}\\[-1mm]
{\tt\small www.qit.ethz.ch}
}
\date{}


\maketitle

\begin{abstract}
  We study an entropy measure for quantum systems that generalizes the
  \emph{von Neumann entropy} as well as its classical counterpart, the
  \emph{Gibbs} or \emph{Shannon entropy}. The entropy measure is based on hypothesis testing and has an elegant formulation as a semidefinite program, a type of convex optimization. After establishing a few
  basic properties, we prove upper and lower bounds in terms of the \emph{smooth entropies}, a family of entropy
  measures that is used to characterize a wide range of operational
  quantities. From the formulation as a semidefinite program, we also prove a result on decomposition of hypothesis tests, which leads to a chain rule for the entropy. 
\end{abstract}


\section{Introduction}\label{aba:sec1}

\emph{Entropy}, originally introduced in thermodynamics, is nowadays
recognized as a rather universal concept with a variety of uses,
ranging from physics and chemistry to information theory and the
theory of computation. Besides the role it plays for foundational
questions, it is also relevant for applications. For example, entropy
is used to study the efficiency of steam engines, but it also occurs
in formulae for the data transmission capacity of optical fibres.

While entropy can be defined in various ways, a very common form
employed for the study of classical systems is the \emph{Gibbs
  entropy} or, in the context of information theory, the \emph{Shannon
  entropy}~\cite{shannon_mathematical_1948}. It is defined for any
probability distribution $P$ as
\begin{align*}
  H(P) = - \sum_x P(x) \log P(x)
\end{align*}
(up to an unimportant proportionality factor). This definition has
been generalized to the von Neumann entropy~\cite{neumann_mathematical_1996}, which is
defined for density
operators,
\begin{align*}
  H(\rho) = -\operatorname{tr}(\rho \log \rho).
\end{align*}
While these entropy measures have a wide range of
applications, it has recently become apparent that they are not suitable for correctly
characterizing operationally relevant quantities in general scenarios
(as explained below). This has led to the development of
extensions~\cite{ogawa_strong_2000}, among them the \emph{information spectrum
  approach}~\cite{Nagaoka2007,han_information-spectrum_2002,Bowen2006}  and
\emph{smooth entropies}~\cite{Renner2004,Renner2005} (where the former can be obtained as an asymptotic limit of the latter~\cite{DatRen09}).

The aim of this work is to study an alternative measure of entropy
that generalizes von Neumann entropy. The generalized entropy is
closely related to smooth entropies, which, in turn, are connected to
a variety of operational quantities.

\subsection{Axiomatic and operational approach to
  entropy} \label{sec_opapproach}

The variety of areas and applications where entropies are used is
impressive, and one may wonder what it is that makes entropy such a
versatile concept.

One could attempt to answer the question from an \emph{axiomatic}
viewpoint. Here, the idea is to consider (small) sets of axioms that
characterize the nature of entropy. There is a vast amount of
literature devoted to the specification of such axioms and their
study~\cite{shannon_mathematical_1948,jaynes_information_1957,aczel_why_1974,ochs_new_1975,lieb_guide_1998,lieb_fresh_2000,csiszar_axiomatic_2008,baumgartner_characterizing_2012}. While
the choice of a set of axioms is ultimately a matter of taste, we
sketch in the following some of the most popular axioms. We do this
for the case of entropies defined on quantum systems, i.e., we
consider functions $H$ from the set of density operators (denoted by
$\rho$) to the real numbers.
\begin{itemize}
\item \emph{Positivity:} $H(\rho) \geq 0$.
\item \emph{Invariance under isometries:} $H(U \rho U^{\dagger}) =
  H(\rho)$.
\item \emph{Continuity:} $H$ is a continuous function of $\rho$.
\item \emph{Additivity:} $H(\rho_A \otimes \rho_B) = H(\rho_A) +
    H(\rho_B)$.
\item \emph{Subadditivity:} $H(\rho_{A B}) \leq H(\rho_A) +
  H(\rho_B)$.\footnote{Here $\rho_{A B}$ denotes a density operator on
    a bipartite system and $\rho_A$ and $\rho_B$ are obtained by
    partial traces over the second and first subsystem, respectively.}
\end{itemize}
The (special) case of classical entropies is obtained by replacing the
density operators by probability distributions. Note that the second
axiom then reduces to the requirement that the entropy is invariant
under permutations.

It is easy to verify that the von Neumann entropy satisfies the above
axioms. Furthermore, it can be shown that (up to a constant factor,
which may be fixed by an additional normalization axiom) the von
Neumann entropy is essentially the only function satisfying the above
postulates~\cite{ochs_new_1975}.  This result -- as well as similar
results based on slightly different sets of axioms -- nicely expose the
universal nature of entropy. Note, in particular, that the above
axioms do not refer specifically to thermodynamic or
information-theoretic properties of a system.

An alternative to this axiomatic approach is to relate entropy to
\emph{operational} quantities. In thermodynamics, examples for such
operational quantities include measures for heat flow or the amount of
work that is transformed into heat during a given process. In
information theory, operational quantities are, for instance, the
minimum size to which the information generated by a source can be
compressed, or the amount of uniform randomness that can be extracted
from a non-uniform source.

Given the very different nature of these operational quantities, it is
not obvious that this approach can lead to a reasonable notion of
entropy. One would rather expect an entire family of entropy
measures -- possibly as large as the number of different operational
quantities one considers. However, there exist remarkable connections,
even relating thermodynamic and information-theoretic quantities. For
example, it follows from Landauer's
principle~\cite{landauer_irreversibility_1961,bennett_logical_1973}
that the amount of work that can be extracted from a system is
directly related to the size to which the information contained in it
can be compressed~\cite{RARDV11,Dahlsten2011,Faist2012}.

Recent work has shown that a large number of operational quantities
can be characterized with one single class of entropy
measures. \emph{Smooth entropies} (denoted by $H_{\min}^\epsilon$ and
$H_{\max}^\epsilon$), which were developed mostly within quantum
information theory, are an example of such a class. For instance,
$H_{\min}^\epsilon$ quantifies the number of uniformly random
(classical) bits that can be deterministically extracted from a weak
source of randomness\cite{Renner2005,RenKoe2005} and
$H_{\max}^\epsilon$ quantifies the number of bits needed to encode a
given (classical) value\cite{Renes12}. More generally,
$H_{\min}^\epsilon$ can be used to characterize
\emph{decoupling}\cite{Dupuis2009}, a quantum version of randomness
extraction\cite{Dupuis2010}, and \emph{state
  merging}\cite{Horodecki2005,Horodecki2006}, which can be seen as the
fully quantum analogue of coding\cite{Berta2009}. Also, a combination
of $H_{\min}^\epsilon$ and $H_{\max}^\epsilon$ gives an expression for
the classical capacity of a classical\cite{Renner2006} or a
quantum\cite{Renes2011} channel, as well as its ``reverse''
capacity\cite{Berta2011}. Additional applications can be found particularly in quantum cryptography (see, e.g.,
~\cite{Renner2005,Damgard2007,Scarani2008a}). Smooth entropies also
have operational interpretations within thermodynamics. For example,
they can be used in a single-shot version of Landauer's principle to quantify the amount of work required by an
operation that moves a given system into a pure
state\cite{RARDV11,Dahlsten2011,Faist2012}.

However, smooth entropies are generally different from the von Neumann
entropy except in special cases. This implies that many operational
quantities, characterized by smooth entropies, are not in general
accurately described by the von Neumann entropy (e.g. the amount of
extractable randomness or the encoding length). In particular, it follows that some of
the axioms considered above must be incompatible with the operational
approach.

This can also be seen directly, for example, for the
(classical) task of randomness extraction. Let $C(X)$ be the number of uniform bits that can be obtained by applying a function to
a random variable $X$ distributed according to $P_X$. Then the
quantity $C$ automatically has the properties one would expect from an
uncertainty measure: it equals $0$ if $X$ is perfectly known, and it
increases as $X$ becomes more uncertain. One may therefore interpret
$C$ as an (operationally defined) entropy measure for classical random variables.

However, while $C$ is indeed positive, invariant under permutations,
and additive, it is not subadditive. To see this, consider a random
variable $R$ uniformly distributed over the set $\{1, \ldots,
2^\ell\}$, for some large $\ell \in \mathbb{N}$. Furthermore, define
the random variables $X$ and $Y$ by
\begin{align*}
  X & =
  \begin{cases}
    R & \text{if $R \leq 2^{\ell-1}$} \\ 0 & \text{otherwise},
  \end{cases}
  \\
  Y & =
  \begin{cases}
    R & \text{if $R > 2^{\ell-1}$} \\ 0 & \text{otherwise.}
  \end{cases}
\end{align*}
Since $\Pr[X=0] = \Pr[Y=0] = \frac{1}{2}$, it is not possible to
extract more than $1$ bit from either of $X$ or $Y$ separately, i.e.,
$C(X) = C(Y) \leq 1$. However, since the pair $(X,Y)$ is in one-to-one
relation to $R$, we have $C(X Y) = C(R) = \ell$. Hence, subadditivity,
$C(X Y) \leq C(X) + C(Y)$ can be violated by an arbitrarily large
amount.\footnote{However, an inequality of similar form can be recovered --- this is known as the \emph{entropy splitting lemma}~\cite{Wullschleger2007,Damgaard07}.}

\subsection{Generalized entropy measure}

The above considerations show that an operational approach to
entropies necessitates the use of entropy measures that are more
general than those obtained by the usual axiomatic approaches. The aim
of this paper is to investigate such a generalization, which is
motivated by previous
work~\cite{buscemi_quantum_2010,brandao_one-shot_2011,wang_one-shot_2012,tomamichel_hierarchy_2012}.
We derive a number of properties of this measure and relate it back to
the better-studied family of smooth entropies.

Our generalized entropy measure is, technically, a family of
entropies, denoted $H_H^\epsilon$, and parametrized by a real number
$\epsilon$ from the interval $[0,1]$. $H_H^\epsilon$ is defined via a
relative-entropy type quantity, i.e., a function that depends on two
density operators, $\rho$ and $\sigma$, similarly to the
Kullback-Leibler divergence~\cite{Kullback1951,Wehrl1978}. This
quantity, denoted $D_H^\epsilon$, has a simple interpretation in the
context of quantum hypothesis testing~\cite{Helstrom1969}. Consider a
measurement for distinguishing whether a system is in state $\rho$ or
$\sigma$. $D_H^\epsilon(\rho\|\sigma)$ then corresponds to the
negative logarithm of the failure probability when the system is in
state $\sigma$, under the constraint that the success probability when
the system is in state $\rho$ is at least $\epsilon$ (see
Section~\ref{sec_relentrdef} below).

Starting from $D_H^\epsilon(\rho\|\sigma)$, it is possible to directly
define a \emph{conditional entropy}, $H_H^\epsilon(A|B)$, i.e., a
measure for the uncertainty of a system $A$ conditioned on a system
$B$ (see Section~\ref{sec_entrdef} below). We note that, while the
conditional von Neumann entropy may be defined analogously using the
Kullback-Leibler divergence, the standard expression for conditional
von Neumann entropy~\cite{Nielsen2000},
\begin{align} 
  \label{eq_condvN} H(A|B) = H(\rho_{A B}) -H(\rho_B)\ ,
\end{align}
cannot be generalized directly. However, as shown in
Section~\ref{sec_chainrule}, $H_H^\epsilon$ satisfies a \emph{chain
  rule}, i.e., an inequality which resembles~\eqref{eq_condvN}. In
addition, we show that $H_H^\epsilon$ has many desirable properties
that one would expect an entropy measure to have (see
Section~\ref{sec_basicproperties}), for instance that it reduces to the von Neumann entropy in the asymptotic limit (Asymptotic Equipartition Property).

Apart from deriving the chain rule for the considered entropy measure, the
  main contribution of this paper is to
establish direct relations
to the \emph{smooth entropy measures} $H_{\min}^\epsilon$ and
$H_{\max}^{\epsilon}$ (Section~\ref{sec_smooth}).  As explained above,
it has been shown that these accurately characterize a number of
operational quantities, such as information compression, randomness
extraction, entanglement manipulation, and channel
coding. Furthermore, they are also relevant in the context of
thermodynamics, e.g., for quantifying the amount of work that can be
extracted from a given system.  The bounds derived in
Section~\ref{sec_smooth} imply that $H_H^\epsilon$ has a similar
operational significance.

\section{Preliminaries}

\subsection{Notation and Definitions}

For a finite-dimensional Hilbert space $\mathcal{H}$, let
$\mathcal{L}(\mathcal{H})$ and $\mathcal{P}(\mathcal{H})$ be the
linear and positive semi-definite operators on $\mathcal{H}$,
respectively. On $\mathcal{L}(\mathcal{H})$ we employ the
Hilbert-Schmidt inner product $\left<X,Y\right>:=\operatorname{Tr}(X^\dagger
Y)$. Quantum states form the set
$\mathcal{S}(\mathcal{H})=\{\rho\in\mathcal{P}(\mathcal{H}):\operatorname{Tr}(\rho)=1\}$,
and we define the set of subnormalized states as
$\mathcal{S}_\leq(\mathcal{H})=\{\rho\in\mathcal{P}(\mathcal{H}):0<\operatorname{Tr}(\rho)\leq1\}$.
To describe multi-partite quantum systems on tensor product spaces we
use capital letters and subscripts to refer to individual subsystems
or marginals. We call a state $\rho_{XB}$ {\it classical-quantum (CQ)}
if it is of the form $\rho_{XB}=\sum_x
p(x)\left|{x}\right>\left<{x}\right|\otimes \rho^x_B$ with
$\rho_B^x\in\mathcal{S}(\mathcal{H}_B)$, $p(x)$ a probability
distribution and $\{\left|{x}\right>\}$ an orthonormal basis of
$\mathcal{H}_X$.

A map $\mathcal{E}:\mathcal{L}(\mathcal{H})\rightarrow
\mathcal{L}(\mathcal{H'})$ for which $\mathcal{E}\otimes\mathcal{I}$, for any $\mathcal{H''}$, maps
$\mathcal{P}(\mathcal{H}\otimes\mathcal{H''})$ to
$\mathcal{P}(\mathcal{H'}\otimes\mathcal{H''})$ is called a completely
positive map (CPM). It is called trace-preserving if
$\operatorname{Tr}(\mathcal{E}[X])=\operatorname{Tr}(X)$ for any
$X\in\mathcal{P}(\mathcal{H})$. A unital map satisfies
$\mathcal{E}(\mathbb{Id})=\mathbb{I}$, and a map is sub-unital if
$\mathcal{E}(\mathbb{I})\leq \mathbb{I}$.  The adjoint $\mathcal{E}^*$
of $\mathcal{E}$ is defined by
$\operatorname{Tr}\left(\mathcal{E}^*(Y)\,X\right) =
\operatorname{Tr}\left(Y\,\mathcal{E}(X)\right)$.

We employ two distance measures on subnormalized states: the purified
distance
$P(\rho,\sigma)$ \cite{gilchrist_distance_2005,rastegin_sine_2006,tomamichel_duality_2010}
and the generalized trace distance $D(\rho,\sigma)=\frac{1}{2}\Vert
\rho-\sigma\Vert_1+\tfrac12|\operatorname{Tr}\rho-\operatorname{Tr}\sigma|$ (where
$\vert\vert\rho\vert\vert_1=\operatorname{Tr}(\sqrt{\rho^\dagger\rho})$).
The purified distance is defined in terms of the generalized fidelity $F(\rho,\sigma)=\Vert\sqrt{\rho}\sqrt{\sigma}\Vert_1+\sqrt{(1-\operatorname{Tr}\rho)(1-\operatorname{Tr}\sigma)}$ by $P(\rho,\sigma)=\sqrt{1-F(\rho,\sigma)^2}$. (The fidelity itself is just the first term in the expression.) The purified and trace distances obey
the following relation~\cite{fuchs_cryptographic_1999}:
$D(\rho,\sigma)\leq P(\rho,\sigma)\leq \sqrt{2D(\rho,\sigma)}$.

Finally, the operator inequalities $A\leq B$ and $A<B$ are taken to mean that $B-A$
is positive semi-definite and positive definite respectively, and when comparing a matrix to a scalar we
assume that the scalar is multiplied by the identity matrix. Note also
that all logarithms taken in the calculations are base 2.

\subsection{Semi-Definite Programs}
\label{sec:sdp}

Watrous has given an elegant formulation of semidefinite programs
especially adapted to the present
context~\cite{watrous_semidefinite_2009}. Here we follow his notation;
see also~\cite{boyd_convex_2004} for a more extensive treatment. A
semidefinite program over $\mathcal{X}=\mathbb{C}^n$ and
$\mathcal{Y}\in\mathbb{C}^m$ is specified by a triple $(\Psi,A,B)$,
for $A$ and $B$ Hermitian operators in $\mathcal{L}(\mathcal{X})$ and
$\mathcal{L}(\mathcal{Y})$ respectively, and
$\Psi:\mathcal{L}(\mathcal{X})\rightarrow \mathcal{L}(\mathcal{Y})$ a
linear, Hermiticity-preserving operation.

This semidefinite program corresponds to two optimization problems, the so-called ``primal'' and ``dual'' problems:\\

\begin{minipage}[t]{0.23\textwidth}
  PRIMAL\\

  minimize\\
  subj. to\\
\end{minipage}
\begin{minipage}[t]{0.23\textwidth}
  \text{}\\
  \\
  $\left<A,X\right>$\\
  $\Psi(X)\geq B$\\
  $X\in\mathcal{P}(\mathcal{X})$

\end{minipage}
\begin{minipage}[t]{0.23\textwidth}
  DUAL\\

  maximize\\
  subj. to\\
\end{minipage}
\begin{minipage}[t]{0.23\textwidth}
  \text{}\\
  \\
  $\left<B,Y\right>$\\
  $\Psi^*(Y)\leq A$\\
  $Y\in\mathcal{P}(\mathcal{Y})$

\end{minipage}\\
\text{}\\
\\
With respect to these problems, one can define the primal and dual
feasible sets $\mathcal{A}$ and $\mathcal{B}$ respectively:
\begin{align}
  \mathcal{A}&=\{X\in\mathcal{P}(\mathcal{X}) :  \Psi(X)\leq B\},\\
  \mathcal{B}&=\{Y\in\mathcal{P}(\mathcal{Y}) : \Psi^*(Y)\geq A\}.
\end{align}
The operators $X\in\mathcal{A}$ and $Y\in\mathcal{B}$ are then called
primal and dual feasible (solutions) respectively.

To each of the primal and dual problems, the associated optimal values
are defined as:\footnote{If $\mathcal{A}=\emptyset$ or
  $\mathcal{B}=\emptyset$, we define $\alpha=\infty$ or
  $\beta=-\infty$ respectively}
\begin{equation*}
  \alpha=\inf_{X\in\mathcal{A}}\left<A,X\right>
  \quad\text{and}\quad\beta=\sup_{Y\in\mathcal{B}}\left<B,Y\right>.
\end{equation*}
Solutions to the primal and dual problems are related by the following
two duality theorems:
\begin{theorem}
  (Weak duality). $\alpha\leq\beta$ for every semidefinite program
  $(\Psi, A, B)$.
\end{theorem}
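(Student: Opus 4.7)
The plan is to show that for any primal-feasible $X \in \mathcal{A}$ and any dual-feasible $Y \in \mathcal{B}$, the inequality $\langle A, X\rangle \leq \langle B, Y\rangle$ holds pointwise; taking the infimum over $X$ and the supremum over $Y$ then immediately yields $\alpha \leq \beta$. The edge cases where $\mathcal{A}$ or $\mathcal{B}$ is empty are handled automatically by the conventions $\alpha = \infty$ and $\beta = -\infty$ recorded in the footnote, so I may restrict attention to the case that both feasible sets are non-empty.

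The two ingredients driving the estimate are (i) the defining adjoint identity $\langle Y, \Psi(X)\rangle = \langle \Psi^*(Y), X\rangle$, and (ii) the elementary observation that $\langle P, Q\rangle = \operatorname{Tr}(P\,Q) \geq 0$ whenever $P$ and $Q$ are both positive semi-definite (which follows by diagonalizing $P = \sum_i \lambda_i \ket{i}\bra{i}$ and noting that $\operatorname{Tr}(P\,Q) = \sum_i \lambda_i \bra{i} Q \ket{i}$ is a sum of non-negative terms).

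The argument is then a two-step sandwich. From $Y \in \mathcal{P}(\mathcal{Y})$ together with $B - \Psi(X) \geq 0$ (primal feasibility of $X$), tool (ii) gives $\langle Y, B - \Psi(X)\rangle \geq 0$, which rearranges to $\langle B, Y\rangle \geq \langle Y, \Psi(X)\rangle$; tool (i) then rewrites the right-hand side as $\langle \Psi^*(Y), X\rangle$. Analogously, from $X \in \mathcal{P}(\mathcal{X})$ together with $\Psi^*(Y) - A \geq 0$ (dual feasibility of $Y$), tool (ii) gives $\langle X, \Psi^*(Y) - A\rangle \geq 0$, i.e., $\langle \Psi^*(Y), X\rangle \geq \langle A, X\rangle$. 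Concatenating yields the chain $\langle A, X\rangle \leq \langle \Psi^*(Y), X\rangle = \langle Y, \Psi(X)\rangle \leq \langle B, Y\rangle$, which is what was needed.

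There is no real obstacle here: weak duality is essentially a bookkeeping argument built on the PSD-inner-product fact and the definition of the adjoint. The only point requiring care is keeping track of which operator inequality gets paired with which positive operator so that tool (ii) can be applied cleanly in each of the two sandwich steps; once that pairing is arranged, the proof writes itself.
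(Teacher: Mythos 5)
Your two-step sandwich $\langle A,X\rangle \leq \langle \Psi^*(Y),X\rangle = \langle Y,\Psi(X)\rangle \leq \langle B,Y\rangle$ is the standard proof of weak duality and its core is correct; the paper itself states this theorem without proof (it is imported from Watrous), so this is exactly the intended argument. One caveat concerns the conventions, which the paper has transcribed inconsistently from Watrous: the displayed primal/dual problems, the feasible sets $\mathcal{A},\mathcal{B}$, and the definitions of $\alpha,\beta$ do not all agree. You rightly work with the feasible sets as defined ($\Psi(X)\leq B$ for $X\in\mathcal{A}$, $\Psi^*(Y)\geq A$ for $Y\in\mathcal{B}$), and your pointwise inequality actually delivers the meaningful form of weak duality, namely $\sup_{X\in\mathcal{A}}\langle A,X\rangle\leq\inf_{Y\in\mathcal{B}}\langle B,Y\rangle$; the final step ``take the infimum over $X$ and the supremum over $Y$'' matches the paper's literal definitions of $\alpha$ and $\beta$ but only yields a strictly weaker, essentially vacuous statement ($\inf\leq\sup$ over a nonempty product of index sets). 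Relatedly, your claim that the empty-feasible-set cases are handled automatically is true for Watrous's conventions ($\alpha=\sup=-\infty$ when $\mathcal{A}=\emptyset$ and $\beta=\inf=+\infty$ when $\mathcal{B}=\emptyset$) but fails for the conventions printed in the footnote: with $\alpha=+\infty$ when $\mathcal{A}=\emptyset$, the inequality $\alpha\leq\beta$ can simply be false when $\mathcal{B}\neq\emptyset$. So you should state explicitly which orientation of the optimal values you are proving the bound for; the sandwich itself is fine either way.
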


\begin{theorem}
  (Slater-type condition for strong duality). For every semi-definite
  program $(\Psi, A, B)$ as defined above, the following two
  statements hold:
  \begin{enumerate}
  \item Strict primal feasibility: If $\beta$ is finite and there
    exists an operator $X> 0$ s.t. $\Psi(X)> B$, then $\alpha=\beta$
    and there exists $Y\in\mathcal{B}$ s.t. $\left< B,Y\right>=\beta$.
  \item Strict dual feasibility: If $\alpha$ is finite and there
    exists an operator $Y> 0$ s.t. $\Psi^*(Y)<A$, then $\alpha=\beta$
    and there exists $X\in\mathcal{A}$ s.t. $\left<
      A,X\right>=\alpha$.
  \end{enumerate}
\end{theorem}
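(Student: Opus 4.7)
The plan is to prove this theorem via the standard Hahn-Banach separating hyperplane argument specialized to the positive semidefinite cone, following the classical treatment of convex programming duality. Because the two statements are symmetric under the SDP-dual involution $(\Psi, A, B) \leftrightarrow (\Psi^*, B, A)$, I would prove part 1 in detail and then deduce part 2 by applying the same reasoning to the dual SDP (the roles of primal and dual attainment swap accordingly).

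For part 1, fix $r < \alpha$ and introduce the convex cone
\begin{equation*}
K = \{(\Psi(X) - Z,\, \langle A, X\rangle - s) : X \in \mathcal{P}(\mathcal{X}),\, Z \in \mathcal{P}(\mathcal{Y}),\, s \geq 0\} \subseteq \mathcal{L}(\mathcal{Y})^{\mathrm{Herm}} \times \mathbb{R}.
\end{equation*}
By the definition of $\alpha$ as the primal infimum, the point $(B, r)$ does not lie in $K$. I would then apply the finite-dimensional Hahn-Banach theorem to produce a nonzero pair $(Y, \lambda) \in \mathcal{L}(\mathcal{Y})^{\mathrm{Herm}} \times \mathbb{R}$ separating $(B, r)$ from $K$. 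Testing the separation inequality on the generators of $K$ --- independently varying $Z \geq 0$, $s \geq 0$, and $X \geq 0$ --- I would deduce in turn that $Y \geq 0$, that $\lambda \geq 0$, that $\lambda A - \Psi^*(Y) \geq 0$ (giving dual feasibility of $Y/\lambda$ once $\lambda > 0$ is established), together with $\langle B, Y\rangle \geq \lambda r$.

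The crucial step, and the main obstacle, is showing $\lambda > 0$: this is precisely where the strict primal feasibility hypothesis enters. If $\lambda$ were zero, then the inequality $\langle \Psi^*(Y), X\rangle \geq \langle Y, B\rangle$ evaluated at the strictly feasible $X_0 > 0$ with $\Psi(X_0) > B$ would force $\langle Y, \Psi(X_0) - B\rangle \leq 0$ for a nonzero $Y \geq 0$, contradicting $\Psi(X_0) - B > 0$. Rescaling so that $\lambda = 1$ thus produces a genuinely dual-feasible $Y$ with $\langle B, Y\rangle \geq r$; letting $r \nearrow \alpha$ gives $\beta \geq \alpha$, which combined with the already-proven weak duality theorem yields $\alpha = \beta$. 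For dual attainment, I would note that the same strict feasibility provides a quantitative bound $\Psi(X_0) - B \geq \epsilon \ident$ for some $\epsilon > 0$, whence $\epsilon \operatorname{Tr}(Y) \leq \langle Y, \Psi(X_0) - B\rangle \leq \langle A, X_0\rangle - r$ for every dual-feasible $Y$ achieving value at least $r$; this bounds the sublevel set of near-optimal dual variables, which is closed, so the supremum is attained by compactness. Part 2 then follows by symmetry, applying the same construction to the dual SDP and interpreting strict dual feasibility as the corresponding Slater condition there.
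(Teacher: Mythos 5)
The paper does not actually prove this theorem: it is quoted verbatim from Watrous's semidefinite programming notes \cite{watrous_semidefinite_2009}, so there is no in-paper argument to compare against. Your proposal is the standard separating-hyperplane proof of conic strong duality under a Slater condition, which is essentially the argument given in that reference; the overall structure (separation, ruling out $\lambda=0$ via the strictly feasible point, and the compactness argument for dual attainment, which correctly uses $\Psi(X_0)-B\geq\epsilon\ident$ to bound $\operatorname{Tr}(Y)$ on the near-optimal sublevel set) is sound.

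One concrete slip to fix: with your cone $K=\{(\Psi(X)-Z,\langle A,X\rangle-s)\}$, the point $(B,r)$ for $r<\alpha$ \emph{does} lie in $K$ whenever the primal is feasible --- take any feasible $X$ (which has $\langle A,X\rangle\geq\alpha>r$) and set $s=\langle A,X\rangle-r\geq 0$. For a minimization the last coordinate must be $\langle A,X\rangle+s$, so that $(B,r)\in K$ encodes the existence of a feasible $X$ with objective value at most $r$; only then does $r<\alpha$ exclude $(B,r)$ from $K$. The sign of $\lambda$ extracted from varying $s$ flips accordingly, but the remainder of your deductions go through unchanged once this is corrected. (Be aware also that the paper's recap of the SDP is internally inconsistent --- the displayed primal constraint $\Psi(X)\geq B$ versus the feasible set $\Psi(X)\leq B$, and the direction of weak duality --- so you should fix one convention, e.g.\ min primal with $\Psi(X)\geq B$ and weak duality $\beta\leq\alpha$, and carry it through consistently.)
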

Given strict feasibility, we obtain \emph{complementary slackness}
conditions linking the optimal $X$ and $Y$ for the primal and the dual
problem:
\begin{equation}
  \Psi(X)Y=BY\quad\text{and}\quad \Psi^*(Y)X=AX.
\end{equation}

Semidefinite programs can be solved efficiently using the ellipsoid
method~\cite{Grotschel1993}. There exists an algorithm that, under
certain stability conditions and bounds on the primal feasible and
dual feasible sets, finds an approximation for the optimal value of
the primal problem. The running time of the algorithm is bounded by a
polynomial in $n$, $m$, and the logarithm of the desired accuracy
(see~\cite{watrous_semidefinite_2009} for more details).

\section{Relative and Conditional Entropies}
We will now introduce the new family of entropy measures, as well as
the smooth entropies, and the set of relative entropies that they are
based on.

\subsection{Definition of relative entropies} \label{sec_relentrdef}

We define the $\epsilon$-relative entropy $D^{\epsilon} (\rho\vert
\vert \sigma)$ of a subnormalized state $\rho\in\mathcal{S}_\leq(\mathcal{H})$
relative to $\sigma\in\mathcal{P}(\mathcal{H})$ as\footnote{Note that
  this differs slightly from both the definitions used by Wang and
  Renner~\cite{wang_one-shot_2012}, Tomamichel and
  Hayashi~\cite{tomamichel_hierarchy_2012}, and Matthews and Wehner~\cite{matthews_finite_2012}. Similar formulations
  specific to mutual information and entanglement were previously
  given respectively by Buscemi and Datta~\cite{buscemi_quantum_2010}
  and Brand\~ao and Datta~\cite{brandao_one-shot_2011}.\mbox{}}
\begin{equation} \label{eq_Depsdef}
  2^{-D^{\epsilon}(\rho\vert\vert\sigma)}:=\tfrac{1}{\epsilon}\min\{\left<Q,\sigma\right>\vert
  0\leq Q\leq 1 \land \left< Q,\rho\right> \geq \epsilon\} .
\end{equation}
This corresponds to minimizing the probability that a strategy $Q$ to
distinguish $\rho$ from $\sigma$ produces a wrong guess on input
$\sigma$ while maintaining a minimum success probability $\epsilon$ to
correctly identify $\rho$. In particular, for $\epsilon=1$,
$D_H^\epsilon(\rho\vert\vert\sigma)$ is equal to R\'enyi's
entropy\cite{Renyi1961} of order $0$, and $D_0(\rho\vert\vert
\sigma)=-\log\operatorname{Tr}(\rho^0\sigma)$, with $\rho^0$ the projector on
the support of $\rho$~\cite{tomamichel_hierarchy_2012}.

The relative min- and max-entropies $D_{\min}$ and $D_{\max}$
for $\rho\in\mathcal{S}_\leq (\mathcal H)$ and $\sigma\in\mathcal{P}(\mathcal{H})$ are defined as follows:\footnote{The relative max-entropy was introduced in~\cite{datta_min_2009}, but our definition of the relative min-entropy differs from the one used therein.}
\begin{align}
  2^{-D_{\min}(\rho\vert\vert\sigma)}&=\left\|\sqrt{\rho}\sqrt{\sigma}\right\|_1^2\\
  D_{\max}(\rho\vert\vert\sigma)&=\min\{\lambda\in \mathbb{R}:
  2^\lambda\sigma\geq\rho\}.
\end{align}
We also define the corresponding smoothed quantities:
\begin{align}
  D_{\min}^\epsilon(\rho\vert\vert\sigma)&=\max_{\tilde\rho\in\mathcal{B}_\epsilon(\rho)}D_{\min}(\tilde\rho\vert\vert\sigma),\\
  D_{\max}^\epsilon(\rho\vert\vert\sigma)&=\min_{\tilde\rho\in\mathcal{B}_\epsilon(\rho)}D_{\max}(\tilde\rho\vert\vert\sigma),
\end{align}
with
$\mathcal{B}_\epsilon(\rho)=\{\tilde\rho\in\mathcal{S}_\leq(\mathcal{H})\vert
P(\tilde\rho,\rho)\leq\epsilon\}$ the purified-distance-ball around
$\rho$ so that the optimization is over all subnormalized states
$\tilde\rho$ $\epsilon$-close to $\rho$ with respect to the purified
distance. The latter is given by
$P(\rho,\sigma)=\sqrt{1-F^2(\rho,\sigma)}$.

\subsection{Definition of the conditional
  entropies} \label{sec_entrdef}

We define the new entropy $H_H^\epsilon(A\vert B)_\rho$, in terms of
the relative entropy we have already introduced, as follows:
\begin{align}
  H_H^\epsilon(A\vert
  B)_\rho&:=-D_H^{\epsilon}(\rho_{AB}\vert\vert\mathbb{I}_A\otimes\rho_B)
\end{align}
In the smooth entropy framework, two variants of the min- and max- entropies are given
by:~\cite{tomamichel_duality_2010,Koenig2009IEEE_OpMeaning,TSSR11}
\begin{align}
  {H^\epsilon_{\min}(A\vert B)_{\rho|\sigma}} &:=
  -D_{\max}^\epsilon(\rho_{AB}\Vert\mathbb{I}_A\otimes\sigma_B) ,\\
  H^\epsilon_{\max}\left(A\vert B\right)_{\rho|\sigma} &:=
  {-D_{\min}^\epsilon(\rho_{AB}\Vert\mathbb{I}_A\otimes\sigma_B)}\ ,
\end{align}
\begin{align}
  {H^\epsilon_{\min}(A\vert B)_\rho} &:=
  \max_{\tilde\rho\in\mathcal{B}_\epsilon(\rho)} \max_{\sigma_B\in\mathcal S_\leq (\mathcal H_B)}\;
  -D_{\max}(\tilde\rho_{AB}\Vert\mathbb{I}_A\otimes\sigma_B) ,\\
  H^\epsilon_{\max}\left(A\vert B\right)_\rho &:=
  \min_{\tilde\rho\in\mathcal{B}_\epsilon(\rho)} \max_{\sigma_B\in\mathcal S_\leq (\mathcal H_B)}\;
  {-D_{\min}(\tilde\rho_{AB}\Vert\mathbb{I}_A\otimes\sigma_B)}\ .
\end{align}
The non-smoothed versions $H_{\min}(A|B)$ and $H_{\max}(A|B)$
are given by setting $\epsilon=0$. In both cases, the optimal $\sigma$ is a normalized state, i.e.\ it is sufficient to restrict the maximization to $\sigma_B\in\mathcal S(\mathcal H_B)$.

For the special case when $\epsilon\rightarrow 0$,
$H_H^\epsilon(A\vert B)$ converges to $H_{\min}(A\vert
B)_{\rho|\rho}$ since for the optimal solutions to the semi-definite program as defined below $X\rightarrow 0$. In
the case where one is also not conditioning on any B-system (i.e. take
$B$ to be a trivial system, or take
$\rho_{AB}=\rho_{A}\otimes\rho_{B}$), then $H_H^\epsilon$ reduces to
the min-entropy:
\begin{equation}
  \lim_{\epsilon\rightarrow0}H_H^\epsilon(A)_\rho=H_{\min}(A)_\rho=-\log \vert\vert\rho_A\vert\vert_\infty.
\end{equation}
Note also that $H_H^\epsilon$ is monotonically increasing in
$\epsilon$: to see this, observe that the dual optimal $\{\mu,X\}$ for
$2^{H_H^\epsilon}$ (see below) is also feasible for $2^{H_H^{\epsilon'}}$ with
$\epsilon'\geq\epsilon$.

%
%
\subsection{Elementary Properties} \label{sec_basicproperties} As we
are going to show in this section, the quantities $D_H^\epsilon$ and
$H_H^\epsilon$ we introduced satisfy many desirable properties one
would expect from an entropy measure.

\subsubsection{Properties of $D_H^\epsilon$}

$D_H^\epsilon$ can be expressed in terms of a semi-definite program, meaning it can be efficiently approximated. Due to strong duality we obtain two equivalent expressions with optimal solutions linked by complementary slackness conditions~\cite{boyd_convex_2004}. 
The semi-definite program for $2^{-D_H^{\epsilon}(\rho\vert\vert\sigma)}$ reads:\\
\\
\begin{minipage}[t]{0.24\textwidth}
  PRIMAL\\

  minimize\\
  subj. to\\
\end{minipage}
\begin{minipage}[t]{0.24\textwidth}
  \text{}\\
  \\
  $\frac{1}{\epsilon}$Tr[Q$\sigma$]\\
  Q$\leq\mathbb{I}$\\
  Tr[Q$\rho$]$\geq\epsilon$\\
  $Q\geq 0$

\end{minipage}
\begin{minipage}[t]{0.24\textwidth}
  DUAL\\

  maximize\\
  subj. to\\
\end{minipage}
\begin{minipage}[t]{0.24\textwidth}
  \text{}\\
  \\
  $\mu-\frac{\text{Tr} [X]}{\epsilon}$\\
  $\mu\rho\leq \sigma + $X\\
  $X\geq 0$\\
$\mu\geq 0$

\end{minipage}

\text{}\\
\\
This yields the following complementary slackness conditions for
primal and dual optimal solutions $\{Q\}$ and $\{ \mu,X \}$:
\begin{align}
  (\mu\rho-X)Q&=\sigma Q\\
  \operatorname{Tr}[Q\rho]&=\epsilon\\
  QX&=X
\end{align}
from which we can infer that $[Q,X]=0$, as well as the fact that the positive part of $(\mu\rho-\sigma)$ is in the eigenspace of $Q$ with eigenvalue 1.\\
\\
Further properties include:
\begin{prop}[Positivity]
  For any $\rho, \sigma\in\mathcal{S}(\mathcal{H})$,
  \begin{equation}
    D_H^\epsilon(\rho\vert\vert\sigma)\geq 0,
  \end{equation}
  with equality if $\rho=\sigma$.
\end{prop}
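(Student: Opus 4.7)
The plan is to exhibit an explicit feasible $Q$ for the primal semi-definite program that saturates the constraint $\langle Q, \rho \rangle \geq \epsilon$ and gives objective value exactly $\epsilon$, which will simultaneously yield the positivity bound and pin down equality when $\rho=\sigma$.

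Concretely, I take $Q = \epsilon \mathbb{I}$. Since $\epsilon \in [0,1]$, this satisfies $0 \leq Q \leq \mathbb{I}$. Because $\rho$ and $\sigma$ are normalized, $\langle Q, \rho \rangle = \epsilon \operatorname{Tr}\rho = \epsilon$ and $\langle Q, \sigma \rangle = \epsilon \operatorname{Tr}\sigma = \epsilon$. Thus $Q = \epsilon \mathbb{I}$ is primal-feasible and achieves objective $\frac{1}{\epsilon}\langle Q,\sigma\rangle = 1$, so that
\begin{equation*}
  2^{-D_H^\epsilon(\rho\|\sigma)} \;\leq\; 1,
\end{equation*}
i.e.\ $D_H^\epsilon(\rho\|\sigma) \geq 0$.

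For the equality case $\rho = \sigma$: any feasible $Q$ satisfies $\langle Q,\sigma\rangle = \langle Q,\rho\rangle \geq \epsilon$, so the primal objective is $\geq 1$; combined with the previous bound this forces $2^{-D_H^\epsilon(\rho\|\rho)} = 1$, hence $D_H^\epsilon(\rho\|\rho)=0$.

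There is no real obstacle here — the main point is just to notice that $Q=\epsilon\mathbb{I}$ is always feasible for normalized states, which simultaneously provides the upper bound in general and the matching lower bound in the equal-states case. The only small subtlety to flag is that this argument uses $\operatorname{Tr}\rho = \operatorname{Tr}\sigma = 1$; for subnormalized states the statement would need to be modified accordingly.
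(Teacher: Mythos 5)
Your proof is correct and follows essentially the same route as the paper: positivity via the feasible choice $Q=\epsilon\mathbb{I}$, and equality for $\rho=\sigma$ because the constraint $\operatorname{Tr}[Q\rho]\geq\epsilon$ forces the objective to be at least $1$. The remark about normalization being needed is a fair observation but does not affect the statement as given.
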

\begin{proof}
  Positivity follows immediately from the definition of $D_H^\epsilon$
  by choosing $Q=\nobreak \epsilon\mathbb{I}$.  Equality is achieved if
  $\rho=\sigma$ because
  $\frac1\epsilon\min_{\operatorname{Tr}(Q\rho)\geq\epsilon}\operatorname{Tr}(Q\rho)=1$.
\end{proof}
Note that $D_H^\epsilon\left(\rho\Vert\sigma\right)=0$ does not
generally imply $\rho=\sigma$: for example, consider the case where
$\epsilon=1$ and where $\rho$ and $\sigma$ have same support.\\

The following property relates the hypothesis testing relative entropy to the Trace Distance. Both the proposition and its proof are due to Marco Tomamichel \cite{EmailMarcoTom}.
\begin{prop}[Relation to trace distance]
For any $\rho, \sigma\in\mathcal{S}(\mathcal{H})$, $0<\epsilon<1$ and $\delta=D(\rho,\sigma)$ the trace distance between $\rho$ and $\sigma$,
\begin{equation}
\log\frac{\epsilon}{\epsilon-(1-\epsilon)\delta}\leq D_H^{\epsilon}(\rho\vert\vert\sigma)\leq \log\frac{\epsilon}{\epsilon-\delta}.
\end{equation}
In particular, we have the Pinsker-like inequality $\frac{1-\epsilon}{\epsilon}\cdot D(\rho,\sigma)\leq D_H^{\epsilon}(\rho\vert\vert\sigma)$. Furthermore, the proposition implies that for $0<\epsilon<1$, $D_H^\epsilon(\rho\vert\vert\sigma)= 0$ if and only if $\rho=\sigma$, inheriting this property from the trace distance.
\end{prop}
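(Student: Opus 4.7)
My plan is to bound the optimum of the primal SDP defining $D_H^\epsilon(\rho\|\sigma)$ from both sides. The upper bound comes from a generic trace-distance estimate applied to every feasible $Q$; the lower bound comes from exhibiting a single primal-feasible $Q$ that attains the target value. Throughout I will use the identity $\operatorname{Tr}[(\rho-\sigma)_+] = \delta$, valid because $\rho$ and $\sigma$ are normalized.

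For the upper bound I take an arbitrary primal-feasible $Q$ (so $0\leq Q\leq I$ and $\operatorname{Tr}[Q\rho]\geq\epsilon$) and estimate
\begin{align*}
\operatorname{Tr}[Q\sigma] \;=\; \operatorname{Tr}[Q\rho] - \operatorname{Tr}[Q(\rho-\sigma)] \;\geq\; \epsilon - \operatorname{Tr}[(\rho-\sigma)_+] \;=\; \epsilon - \delta,
\end{align*}
using $Q\leq I$ in the middle step. Taking the infimum over $Q$ yields $\epsilon \cdot 2^{-D_H^\epsilon(\rho\|\sigma)} \geq \epsilon - \delta$, which rearranges to $D_H^\epsilon(\rho\|\sigma) \leq \log\frac{\epsilon}{\epsilon-\delta}$.

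For the lower bound I construct an explicit feasible $Q$. Let $P$ be the projector onto the positive part of $\rho-\sigma$, write $p := \operatorname{Tr}[P\rho]$, and set
\begin{align*}
Q \;:=\; t\,I + (1-t)\,P, \qquad t \;=\; \frac{\epsilon - p}{1-p},
\end{align*}
so that $0\leq Q\leq I$ in the regime $p\leq \epsilon$. A direct computation based on $\operatorname{Tr}[P(\rho-\sigma)]=\delta$ gives $\operatorname{Tr}[Q\rho]=\epsilon$ and
\begin{align*}
\operatorname{Tr}[Q\sigma] \;=\; \epsilon - (1-t)\delta \;=\; \epsilon - \frac{(1-\epsilon)\delta}{1-p} \;\leq\; \epsilon - (1-\epsilon)\delta,
\end{align*}
where the last inequality uses $p\geq 0$. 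Plugging back into the primal yields $\epsilon\cdot 2^{-D_H^\epsilon(\rho\|\sigma)} \leq \epsilon - (1-\epsilon)\delta$, i.e.\ the stated lower bound. The main obstacle lies in the complementary regime $p>\epsilon$, where $t<0$ forces a replacement construction: a natural alternative is the scaled projector $Q=(\epsilon/p)P$, but checking that this (or some interpolation of it with the previous test) realizes the target requires a careful case split, and this is the technically delicate part of the proof.

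The Pinsker-like inequality then follows from the lower bound combined with the elementary estimate $-\log_2(1-x)\geq x$ valid for $x\in[0,1)$, applied with $x=(1-\epsilon)\delta/\epsilon$. Finally, the equality characterization is immediate: if $\rho=\sigma$ then $\delta=0$ and positivity gives $D_H^\epsilon(\rho\|\sigma)=0$; conversely, if $\delta>0$ then the lower bound strictly exceeds $0$, so $D_H^\epsilon(\rho\|\sigma)=0$ forces $\rho=\sigma$.
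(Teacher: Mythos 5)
Your write-up follows the paper's proof essentially verbatim: the upper bound via $\operatorname{Tr}[Q(\rho-\sigma)]\le\delta$ is the same argument, and your test $Q=t\,\mathbb{I}+(1-t)P$ with $t=\frac{\epsilon-p}{1-p}$ is exactly the paper's $\tilde Q=(\epsilon-\mu)\mathbb{I}+(1-\epsilon+\mu)\{\rho>\sigma\}$, since $\epsilon-\mu=\frac{\epsilon-p}{1-p}$ and $1-\epsilon+\mu=\frac{1-\epsilon}{1-p}$. The obstacle you flag in the regime $p=\operatorname{Tr}[P\rho]>\epsilon$ is a genuine gap, and you were right not to wave it away: it cannot be closed, because the stated lower bound is false there. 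Take $\rho=\mathrm{diag}(1,0)$ and $\sigma=\mathrm{diag}(1-\delta,\delta)$, so $p=1>\epsilon$ and $D(\rho,\sigma)=\delta$. For any $0<\epsilon<1$ the optimal test is $Q=\mathrm{diag}(\epsilon,0)$, giving $2^{-D_H^\epsilon(\rho\|\sigma)}=1-\delta$, i.e.\ $D_H^\epsilon(\rho\|\sigma)=-\log(1-\delta)$ independently of $\epsilon$. With $\epsilon=1/4$ and $\delta=0.1$ this equals about $0.152$, whereas the claimed lower bound is $\log\frac{\epsilon}{\epsilon-(1-\epsilon)\delta}=\log\frac{10}{7}\approx 0.515$ and the Pinsker-like bound is $\frac{1-\epsilon}{\epsilon}\delta=0.3$; in this family the claimed inequality fails whenever $\epsilon<1/2$ and $\delta>0$. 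The paper's own proof contains exactly the same unacknowledged hole: it asserts $0\le\tilde Q\le\mathbb{I}$, which holds precisely when $p\le\epsilon$.

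Two further remarks. First, your suggested fallback $Q=(\epsilon/p)P$ for $p\ge\epsilon$ is feasible and gives $\operatorname{Tr}[Q\sigma]=\epsilon(1-\delta/p)\le\epsilon(1-\delta)$, so what survives unconditionally is the weaker but correct bound $D_H^\epsilon(\rho\|\sigma)\ge\min\bigl\{1,\tfrac{1-\epsilon}{\epsilon}\bigr\}\cdot\delta$; in particular the qualitative claim that $D_H^\epsilon(\rho\|\sigma)=0$ iff $\rho=\sigma$ for $0<\epsilon<1$ does hold, even though the quantitative lower bound does not. Second, everything else in your proposal --- the upper bound, the computation in the regime $p\le\epsilon$, and the derivation of the Pinsker-type inequality from the lower bound via $-\log(1-x)\ge x$ --- is correct as written.
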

\begin{proof}
The trace distance can be written as
\begin{equation}
D(\rho,\sigma)=\max_{0\leq Q\leq 1} \operatorname{Tr}(Q(\rho-\sigma))=\operatorname{Tr}(\{\rho>\sigma\}(\rho-\sigma)),
\end{equation}
where $\{\rho>\sigma\}$ denotes the projector onto the positive part of $(\rho-\sigma)$. We thus immediately have that $\operatorname{Tr}(Q(\rho-\sigma))\leq\delta=D(\rho,\sigma)$ for all $0\leq Q\leq \mathbb{I}$, and so $\operatorname{Tr}(Q\sigma)\geq\operatorname{Tr}(Q\rho)-\delta\geq \epsilon-\delta$ for $Q$ the optimal choice in $D_H^{\epsilon}(\rho\vert\vert\sigma)$. This directly implies that $2^{-D_H^{\epsilon}(\rho\vert\vert\sigma)}\geq \frac{\epsilon-\delta}{\epsilon}$. This proves the upper bound.\\
\\
For the lower bound, we may choose $0\leq \tilde Q\leq \mathbb{I}$ as
\begin{equation}
\tilde Q=(\epsilon-\mu)\mathbb{I}+(1-\epsilon+\mu)\{\rho>\sigma\},  \quad\text{where } \mu=\frac{(1-\epsilon)\operatorname{Tr}(\{\rho>\sigma\}\rho)}{1-\operatorname{Tr}(\{\rho>\sigma\}\rho)}.
\end{equation}
Hence, $\mu=(1-\epsilon+\mu)\operatorname{Tr}(\rho\{\rho>\sigma\})$ and thus
\begin{equation}
\operatorname{Tr}(\tilde Q\rho)=(\epsilon-\mu)+(1-\epsilon+\mu)\operatorname{Tr}(\rho\{\rho>\sigma\})=\epsilon.
\end{equation}
Moreover,
\begin{equation}
\operatorname{Tr}(\tilde Q\sigma)=\epsilon-\mu+(1-\epsilon+\mu)\operatorname{Tr}(\{\rho>\sigma\}\sigma)=\epsilon-\frac{(1-\epsilon)\delta}{1-\operatorname{Tr}(\{\rho>\sigma\}\rho)}\leq \epsilon-(1-\epsilon)\delta.
\end{equation}
Hence, $D_H^{\epsilon}(\rho\vert\vert\sigma)\geq\log\frac{\epsilon}{\epsilon-(1-\epsilon)\delta}$. For the Pinsker-like inequality, observe that $\log\frac{\epsilon}{\epsilon-(1-\epsilon)\delta}=-\log(1-\frac{(1-\epsilon)\delta}{\epsilon})\geq\delta\frac{1-\epsilon}{\epsilon}$.
\end{proof}

\begin{prop}[Data Processing Inequality (DPI)]
  For any completely positive, trace non-increasing map $\mathcal{E}$,
  \begin{equation}
    D_H^\epsilon(\rho\vert\vert\sigma)\geq D_H^\epsilon(\mathcal{E}(\rho)\vert\vert\mathcal{E}(\sigma)).
  \end{equation}
\end{prop}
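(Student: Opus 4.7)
The plan is to prove the inequality by exhibiting, for every feasible strategy $Q'$ in the hypothesis-testing problem defining $D_H^\epsilon(\mathcal{E}(\rho)\|\mathcal{E}(\sigma))$, a feasible strategy $Q$ in the problem defining $D_H^\epsilon(\rho\|\sigma)$ that achieves the same (or smaller) objective value. Since both quantities are defined as $-\log$ of a minimization, it suffices to show
\[
2^{-D_H^\epsilon(\rho\|\sigma)} \;\leq\; 2^{-D_H^\epsilon(\mathcal{E}(\rho)\|\mathcal{E}(\sigma))},
\]
which I will do by pulling back the optimal primal variable through the adjoint map $\mathcal{E}^*$.

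Concretely, I would let $Q'$ be any feasible point for the primal program defining $D_H^\epsilon(\mathcal{E}(\rho)\|\mathcal{E}(\sigma))$, i.e.\ $0\leq Q'\leq \mathbb{I}$ with $\langle Q',\mathcal{E}(\rho)\rangle\geq \epsilon$, and define $Q := \mathcal{E}^*(Q')$. The verification that $Q$ is feasible for $D_H^\epsilon(\rho\|\sigma)$ splits into three checks. First, $Q\geq 0$ because the adjoint of a completely positive map is itself positive. Second, $\langle Q,\rho\rangle = \langle \mathcal{E}^*(Q'),\rho\rangle = \langle Q',\mathcal{E}(\rho)\rangle \geq \epsilon$ by the defining adjoint relation. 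Third, and this is the only point that actually uses the trace non-increasing hypothesis, $Q\leq \mathbb{I}$: from $Q'\leq\mathbb{I}$ and positivity of $\mathcal{E}^*$ one gets $\mathcal{E}^*(Q')\leq\mathcal{E}^*(\mathbb{I})$, and trace non-increasing $\mathcal{E}$ is equivalent to $\mathcal{E}^*(\mathbb{I})\leq\mathbb{I}$ (sub-unitality of the adjoint), so $Q\leq\mathbb{I}$ as required.

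Given feasibility, the bound follows immediately from the same adjoint identity applied to $\sigma$:
\[
2^{-D_H^\epsilon(\rho\|\sigma)} \;\leq\; \tfrac{1}{\epsilon}\langle Q,\sigma\rangle
 \;=\; \tfrac{1}{\epsilon}\langle \mathcal{E}^*(Q'),\sigma\rangle
 \;=\; \tfrac{1}{\epsilon}\langle Q',\mathcal{E}(\sigma)\rangle.
\]
Taking the infimum over all feasible $Q'$ yields $2^{-D_H^\epsilon(\rho\|\sigma)}\leq 2^{-D_H^\epsilon(\mathcal{E}(\rho)\|\mathcal{E}(\sigma))}$, and negating the logarithm finishes the proof.

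There is really no major obstacle: the argument is the standard ``pull back the test via the adjoint'' manoeuvre that works for any operationally motivated entropy of hypothesis-testing type. The one place where care is needed — and the reason the hypothesis on $\mathcal{E}$ is exactly ``trace non-increasing'' and not something weaker — is the step $\mathcal{E}^*(\mathbb{I})\leq \mathbb{I}$; this is what guarantees that the pulled-back test $\mathcal{E}^*(Q')$ is still a valid POVM element, and without it feasibility would fail.
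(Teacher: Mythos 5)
Your proof is correct and is exactly the standard ``pull back the test through the adjoint'' argument; the paper itself does not spell out a proof but defers to Wang and Renner, where this same construction ($Q=\mathcal{E}^*(Q')$, with feasibility of $Q$ following from positivity of $\mathcal{E}^*$ and $\mathcal{E}^*(\mathbb{I})\leq\mathbb{I}$) is used. Your identification of the trace-non-increasing hypothesis as precisely what guarantees $\mathcal{E}^*(Q')\leq\mathbb{I}$ is the right observation, and no gaps remain.
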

\begin{proof}
  For a proof of this DPI, see~\cite{wang_one-shot_2012}.
\end{proof}
\begin{prop}[Asymptotic Equipartition Property]
  Let
  \begin{align*}
    D(\rho\vert\vert\sigma)=\operatorname{Tr}[\rho(\log\rho-\log\sigma)]
  \end{align*}
  be the relative entropy between $\rho$ and
  $\sigma$\cite{Wehrl1978}. Then, for any $0<\epsilon < 1$,
  \begin{align}
    \lim_{n\rightarrow \infty}\tfrac{1}{n}\,
    D_H^{\epsilon}(\rho^{\otimes n}\vert\vert\sigma^{\otimes
      n})&=D(\rho\vert\vert\sigma).
  \end{align}
\end{prop}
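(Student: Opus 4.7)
This is the content of the quantum Stein's lemma: for any fixed $\epsilon\in(0,1)$, the normalized optimal type-II error exponent in i.i.d.\ hypothesis testing converges to the Umegaki relative entropy. The plan is to establish two matching bounds on $\tfrac1n D_H^\epsilon(\rho^{\otimes n}\|\sigma^{\otimes n})$: a direct (liminf) bound by an explicit construction of primal-feasible tests, and a converse (limsup) bound by invoking the Ogawa--Nagaoka strong converse~\cite{ogawa_strong_2000}.

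For the direct part I would follow the Hiai--Petz strategy. Let $\mathcal P_n$ be the pinching map onto the eigenspaces of $\rho^{\otimes n}$, and set $\bar\sigma_n:=\mathcal P_n(\sigma^{\otimes n})$. Two properties are crucial: $[\rho^{\otimes n},\bar\sigma_n]=0$, and the Hayashi pinching inequality $\sigma^{\otimes n}\le v_n\bar\sigma_n$ with $v_n\le(n+1)^{d-1}$ polynomial in $n$ ($d=\dim\mathcal H$). For any $R<D(\rho\|\sigma)$, define $Q_n$ as the non-negative spectral projector of $\rho^{\otimes n}-2^{nR}\bar\sigma_n$ in their common eigenbasis. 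By construction $\operatorname{Tr}(Q_n\bar\sigma_n)\le 2^{-nR}\operatorname{Tr}(Q_n\rho^{\otimes n})\le 2^{-nR}$, and hence $\operatorname{Tr}(Q_n\sigma^{\otimes n})\le v_n\cdot 2^{-nR}$. It remains to show primal feasibility $\operatorname{Tr}(Q_n\rho^{\otimes n})\ge\epsilon$ for $n$ large; in the common eigenbasis of $\rho^{\otimes n}$ and $\bar\sigma_n$ this is a classical information-spectrum statement, following from a Chebyshev / law-of-large-numbers estimate on the normalized log-likelihood ratio $\tfrac1n(\log\rho^{\otimes n}-\log\bar\sigma_n)$ under $\rho^{\otimes n}$. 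Putting the pieces together yields $\tfrac1n D_H^\epsilon(\rho^{\otimes n}\|\sigma^{\otimes n})\ge R-O(\log n/n)$, and letting $R\uparrow D(\rho\|\sigma)$ gives the bound.

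For the converse I would use the Ogawa--Nagaoka strong converse: if a sequence of tests satisfies $\operatorname{Tr}(Q_n\sigma^{\otimes n})\le 2^{-nR}$ with $R>D(\rho\|\sigma)$, then $\operatorname{Tr}(Q_n\rho^{\otimes n})\to 0$. Supposing $\limsup\tfrac1n D_H^\epsilon\ge D(\rho\|\sigma)+\delta$ for some $\delta>0$, the primal-optimal tests along a subsequence would achieve $\operatorname{Tr}(Q_n\sigma^{\otimes n})\le\epsilon\cdot 2^{-n(D(\rho\|\sigma)+\delta)}$ while $\operatorname{Tr}(Q_n\rho^{\otimes n})\ge\epsilon$, a contradiction.

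The main obstacle lies in the concentration step of the direct part, since $\bar\sigma_n$ does not factorize into tensor products and so the classical LLN cannot be applied directly to $\log(\rho^{\otimes n}/\bar\sigma_n)$. The standard workaround combines the pinching inequality with operator monotonicity of the logarithm to obtain $\log\bar\sigma_n\ge\log\sigma^{\otimes n}-\log v_n$, reducing the concentration analysis to the genuinely i.i.d.\ quantity $\tfrac1n(\log\rho^{\otimes n}-\log\sigma^{\otimes n})$ up to an $O(\log n/n)$ additive correction; this quantity has mean $D(\rho\|\sigma)$ and variance of order $1/n$ (the relative-entropy variance of $(\rho,\sigma)$ divided by $n$), so a Chebyshev-type bound suffices.
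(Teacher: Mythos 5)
Your overall route coincides with the paper's, which simply observes that $2^{-D_H^\epsilon(\rho^{\otimes n}\|\sigma^{\otimes n})}$ is $\tfrac1\epsilon$ times the optimal type-II error at fixed type-I success probability $\epsilon$ and then invokes quantum Stein's lemma (Hiai--Petz for achievability, Ogawa--Nagaoka for the strong converse), the $\tfrac1n\log\tfrac1\epsilon$ correction vanishing in the limit. Your converse is fine, and you are right that the \emph{strong} converse is genuinely needed here: the weak converse via data processing of $D$ under the two-outcome measurement only yields $\limsup_n\tfrac1n D_H^\epsilon\leq D(\rho\|\sigma)/\epsilon$.

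The gap is in your self-contained achievability argument, precisely at the step you flag as the main obstacle. Writing $p_i,q_i$ for the eigenvalues of $\rho^{\otimes n}$ and $\bar\sigma_n$ in their common eigenbasis, the type-I success probability of your test is $\Pr_{i\sim p}[\log p_i-\log q_i\geq nR]$, so you need a \emph{lower} bound on the log-likelihood ratio $\log p_i-\log q_i$ holding with probability close to one. But the pinching inequality $\sigma^{\otimes n}\leq v_n\bar\sigma_n$ gives $\log\bar\sigma_n\geq\log\sigma^{\otimes n}-\log v_n$, i.e.\ a lower bound on $\log q_i$ and hence only the \emph{upper} bound $\log p_i-\log q_i\leq\langle i\vert(\log\rho^{\otimes n}-\log\sigma^{\otimes n})\vert i\rangle+\log v_n$; this shows your success probability is at most the corresponding i.i.d.\ tail probability, which is the wrong direction. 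Relatedly, the same inequality together with data processing only yields $D(\rho^{\otimes n}\|\bar\sigma_n)\leq nD(\rho\|\sigma)$, so not even the mean of your log-likelihood ratio is controlled from below, and the whole concentration step collapses. The standard repairs pinch with respect to $\sigma^{\otimes n}$ rather than $\rho^{\otimes n}$: there the Pythagorean identity $D(\rho^{\otimes n}\|\sigma^{\otimes n})=D(\rho^{\otimes n}\|\mathcal{P}_{\sigma^{\otimes n}}(\rho^{\otimes n}))+D(\mathcal{P}_{\sigma^{\otimes n}}(\rho^{\otimes n})\|\sigma^{\otimes n})$, combined with $D(\rho^{\otimes n}\|\mathcal{P}_{\sigma^{\otimes n}}(\rho^{\otimes n}))\leq\log v_n$ from the pinching inequality, gives the needed lower bound on the classical relative entropy, and the residual non-i.i.d.\ concentration is handled either by the Hiai--Petz blocking argument (apply the classical lemma to i.i.d.\ blocks of fixed length $m$, then let $m\to\infty$) or by Hayashi's R\'enyi-type tail bound. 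The simplest fix of all is to do what the paper does and cite the direct part of Stein's lemma rather than reprove it.
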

\begin{proof}
  From Stein's lemma\cite{ogawa_strong_2000,hiai_proper_1991} it
  immediately follows that
  \begin{align}
    \lim_{n\rightarrow \infty}\tfrac{1}{n}\,
    D_H^{\epsilon}(\rho^{\otimes n}\vert\vert\sigma^{\otimes
      n})&=\lim_{n\rightarrow \infty}-\tfrac{1}{n} \log
    \min\tfrac{1}{\epsilon} \operatorname{Tr}\{\sigma^{\otimes n}Q\},
    \\
    &=D(\rho\vert\vert\sigma)-\lim_{n\rightarrow \infty}\tfrac{1}{n}
    \left(\log\tfrac{1}{\epsilon}\right)
    \\
    &=D(\rho\vert\vert\sigma),
  \end{align}
  where the minimum is taken over $0\leq Q \leq 1$ such that
  $\operatorname{Tr}Q\rho\geq \epsilon$.
\end{proof}

\subsubsection{Properties of $H_H^\epsilon$}
\begin{prop}[Bounds]
  \label{prop:bounds}
  For $\rho_{AB}$ an arbitrary normalized quantum state and
  $\rho_{XB}$ a classical-quantum state,
  \begin{align}
    -\log|A|\leq &H_H^\epsilon(A|B)_\rho\leq \log|A|,\\
    0\leq &H_H^\epsilon(X|B)_\rho\leq \log|X|.
  \end{align}
  For classical-quantum states, $H_H^\epsilon(X\vert B)=0$ if $X$ is
  completely determined by $B$ (so that
  $\operatorname{Tr}(\rho_B^x\rho_B^{x'})=0$ for any $x'\neq x$), and the
  entropy is maximal if X is completely mixed and independent of B
  (i.e. $\rho_{XB}=\frac{1}{\vert X \vert}\mathbb{I}_X\otimes\rho_B$).
\end{prop}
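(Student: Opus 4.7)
The plan is to prove each inequality by constructing an explicit feasible point in the primal SDP for $D_H^\epsilon(\rho_{AB}\|\mathbb{I}_A\otimes\rho_B)$, or by combining the primal success constraint with an operator inequality between $\rho_{AB}$ and $\mathbb{I}_A\otimes\rho_B$.

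For the upper bound $H_H^\epsilon(A|B)_\rho\leq\log|A|$, i.e.\ $D_H^\epsilon(\rho_{AB}\|\mathbb{I}_A\otimes\rho_B)\geq-\log|A|$, I take $Q=\epsilon\mathbb{I}$ in the primal. This is feasible ($0\leq Q\leq\mathbb{I}$ and $\operatorname{Tr}(Q\rho_{AB})=\epsilon$), and attains primal value $\frac{1}{\epsilon}\operatorname{Tr}(\epsilon\mathbb{I}_A\otimes\rho_B)=|A|$, so $2^{-D_H^\epsilon}\leq|A|$.

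For the lower bound $H_H^\epsilon(A|B)_\rho\geq-\log|A|$, I want to show $2^{-D_H^\epsilon}\geq 1/|A|$ by bounding the primal objective on every feasible $Q$. The key ingredient is the operator inequality $\rho_{AB}\leq|A|(\mathbb{I}_A\otimes\rho_B)$; given this, any primal feasible $Q$ satisfies $\epsilon\leq\operatorname{Tr}(Q\rho_{AB})\leq|A|\operatorname{Tr}(Q(\mathbb{I}_A\otimes\rho_B))$, so the objective is at least $1/|A|$. To establish the operator inequality itself (which I consider the only non-routine step) I would use the pinching inequality applied to the complete family of $|A|$ orthogonal projectors $\{|i\rangle\langle i|_A\otimes\mathbb{I}_B\}_i$: on the one hand $\hat\rho_{AB}:=\sum_i|i\rangle\langle i|_A\otimes\langle i|\rho_{AB}|i\rangle\geq\rho_{AB}/|A|$ by pinching, and on the other $\hat\rho_{AB}\leq\mathbb{I}_A\otimes\rho_B$ because each block $\langle i|\rho_{AB}|i\rangle$ is bounded above by the sum $\rho_B=\sum_i\langle i|\rho_{AB}|i\rangle$ of nonnegative terms.

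The upper bound for CQ states is the $A=X$ specialization of the above. For the lower bound $H_H^\epsilon(X|B)_\rho\geq 0$, I observe that in the classical-quantum case the stronger inequality $\rho_{XB}\leq\mathbb{I}_X\otimes\rho_B$ holds directly: each block $p_x\rho_B^x$ is dominated by $\rho_B=\sum_{x'}p_{x'}\rho_B^{x'}$, so any feasible $Q$ satisfies $\operatorname{Tr}(Q(\mathbb{I}_X\otimes\rho_B))\geq\operatorname{Tr}(Q\rho_{XB})\geq\epsilon$, yielding primal value $\geq 1$.

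Finally, the two equality cases for CQ states are handled by exhibiting explicit $Q$. If $X$ is fully determined by $B$, the states $\{\rho_B^x\}$ have pairwise orthogonal supports; letting $P_B^x$ be the projector onto $\operatorname{supp}\rho_B^x$ and taking $Q=\epsilon\sum_x|x\rangle\langle x|_X\otimes P_B^x$ gives $\operatorname{Tr}(Q\rho_{XB})=\epsilon$ and $\frac{1}{\epsilon}\operatorname{Tr}(Q(\mathbb{I}_X\otimes\rho_B))=1$, so $H_H^\epsilon(X|B)\leq 0$, hence $=0$ by the lower bound. If instead $\rho_{XB}=\frac{1}{|X|}\mathbb{I}_X\otimes\rho_B$, then $\mathbb{I}_X\otimes\rho_B=|X|\rho_{XB}$ and a direct substitution yields $2^{-D_H^\epsilon}=\frac{|X|}{\epsilon}\min\{\operatorname{Tr}(Q\rho_{XB}):\operatorname{Tr}(Q\rho_{XB})\geq\epsilon,\ 0\leq Q\leq\mathbb{I}\}=|X|$, so $H_H^\epsilon=\log|X|$.
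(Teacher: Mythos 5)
Your proof is correct, and its overall architecture is the same as the paper's: explicit feasible points in the primal SDP for the upper bounds and equality cases, and the operator inequality $\rho_{AB}\leq |A|\,\mathbb{I}_A\otimes\rho_B$ (resp.\ $\rho_{XB}\leq \mathbb{I}_X\otimes\rho_B$ for CQ states) for the lower bounds. The one genuinely different ingredient is how you establish $\rho_{AB}\leq |A|\,\mathbb{I}_A\otimes\rho_B$: the paper obtains it from the Weyl--Heisenberg twirl, writing $\frac{1}{|A|}\mathbb{I}_A\otimes\rho_B=\mathcal{E}\otimes\mathbb{I}[\rho_{AB}]$ as a uniform mixture of $|A|^2$ unitary conjugates of $\rho_{AB}$, one of which is $\rho_{AB}$ itself, whereas you factor the inequality through the pinched state $\hat\rho_{AB}$ using Hayashi's pinching inequality $\hat\rho_{AB}\geq\rho_{AB}/|A|$ together with the blockwise bound $\langle i|\rho_{AB}|i\rangle\leq\rho_B$. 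Both derivations are standard and equally short; the twirl gives the constant $|A|$ from a group of size $|A|^2$ acting irreducibly, while the pinching route only needs $|A|$ projectors and also makes the CQ case ($\rho_{XB}=\hat\rho_{XB}$, so the pinching factor disappears and one gets the stronger bound $0$) visibly a degenerate instance of the same argument. You also spell out the two extremal cases with explicit tests $Q$, which the paper dismisses as immediate; your verifications (orthogonality of the supports of the $\rho_B^x$ from $\operatorname{Tr}(\rho_B^x\rho_B^{x'})=0$, and the substitution $\mathbb{I}_X\otimes\rho_B=|X|\rho_{XB}$) are both correct.
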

\begin{proof}
  Start with the upper bound on $H_H^\epsilon$, and choose
  $\epsilon\mathbb{I}$ as a feasible $Q$:
  \begin{align}
    2^{H_H^\epsilon(A|B)_\rho}
    &=\min_{\operatorname{Tr}[Q_{AB}\rho_{AB}]\geq
      \epsilon}\tfrac1\epsilon\operatorname{Tr}[Q_{AB}\mathbb{I}_A\otimes
    \rho_B]
    \\
    &\leq
    \tfrac1\epsilon\operatorname{Tr}[\epsilon\mathbb{I}_{AB}\mathbb{I}_A\otimes
    \rho_B]
    \\
    &=|A|.
  \end{align}
For the lower bound we use the inequality $|A|\mathbb{I}_A\otimes \rho_B\geq
\rho_{AB}$, which holds for arbitrary quantum states $\rho_{AB}$.
 To establish this inequality, define the superoperator $\mathcal{E}$ as $\mathcal{E}(\rho)=\frac1{d^2}\sum_{j,k}(U^jV^k)\rho(U^jV^k)^{\dagger}$. Here, $d={\rm dim}(\mathcal{H})$ while $U$ and $V$ are unitary operators defined by $\left|{j}\right\rangle=\left|{j+1}\right\rangle$ and $V\left|{k}\right\rangle=\omega^{k}\left|{k}\right\rangle$, for an orthonormal basis set $\{\left|{j}\right\rangle\}_{j=0}^{d-1}$, $\omega=e^{2\pi i/d}$, and where arithmetic inside the ket is taken modulo $d$. (The operators $U$ and $V$ are often called the discrete Weyl-Heisenberg operators, as they generate a discrete projective representation of the Heisenberg algebra.) Then it is easy to work out that $\mathcal{E}\otimes\mathbb{I}[\rho^{AB}]=\frac{1}{|A|}\mathbb{I}_A\otimes \rho_B$, which by the form of $\mathcal{E}$ implies the sought-after inequality. Then, for the optimal $Q_{AB}$ in $H_H^\epsilon(A|B)_\rho$,  
\begin{align}
2^{H_H^\epsilon(A|B)_\rho} &=\frac1\epsilon\operatorname{Tr}[Q_{AB}\,\mathbb{I}_A\otimes\rho_B]\\
&\geq \frac1{\epsilon|A|}\operatorname{Tr}[Q_{AB}\rho_{AB}]\\
&\geq \frac{1}{|A|}.
\end{align}
Classical-quantum states $\rho_{XB}$ obey $\mathbb{I}_X\otimes \rho_B\geq \rho_{XB}$, as $\sum_{x'} p_{x'} \rho^{x'}_B\geq p_x\rho^x_B$ for all $x$. This  implies $H_H^\epsilon(X|B)_\rho\geq  0$ by the same argument.

  That the extremal cases are reached for the described cases follows
  immediately from the respective definitions of $\rho_{XB}$ and
  $H_H^\epsilon$.
\end{proof}

Similarly to $D_H^\epsilon$, $H_H^\epsilon$ also satisfies a data
processing inequality\footnote{This proof is adapted from the DPI
  proof for a differently defined $H^\epsilon$ in Tomamichel and
  Hayashi~\cite{tomamichel_hierarchy_2012}. }.
\begin{prop}[Data Processing Inequality]

  \label{prop:DataProcessingInequalityLea} 
  For any $\rho_{AB}\in\mathcal{S}(\mathcal{H_{AB}})$, let
  $\mathcal{E}:A\rightarrow A'$ be a sub-unital TP-CPM, and
  $\mathcal{F} :B\rightarrow B'$ be a TP-CPM. Then, for
  $\tau_{A'B'}=\mathcal{E}\circ\mathcal{F}(\rho_{AB})$,
  \begin{equation}
    H_H^\epsilon(A\vert B)_\rho \leq H_H^\epsilon(A'\vert B')_\tau
  \end{equation}
\end{prop}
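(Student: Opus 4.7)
The plan is to reduce the statement to the data processing inequality for $D_H^\epsilon$ (Proposition~\ref{prop:DataProcessingInequalityLea}'s relative-entropy counterpart, already proved above), supplemented by two elementary monotonicities. Unfolding definitions, the claim $H_H^\epsilon(A|B)_\rho \leq H_H^\epsilon(A'|B')_\tau$ is equivalent to
\begin{equation*}
  D_H^\epsilon\bigl(\tau_{A'B'}\,\big\Vert\,\mathbb{I}_{A'}\otimes\tau_{B'}\bigr) \;\leq\; D_H^\epsilon\bigl(\rho_{AB}\,\big\Vert\,\mathbb{I}_{A}\otimes\rho_{B}\bigr).
\end{equation*}
So I want to transform the right-hand side into the left-hand side using permissible operations that only decrease $D_H^\epsilon$.

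First, I apply the DPI for $D_H^\epsilon$ with the CPTP map $\mathcal{E}\otimes\mathcal{F}$. This gives
\begin{equation*}
  D_H^\epsilon\bigl(\rho_{AB}\,\big\Vert\,\mathbb{I}_A\otimes\rho_B\bigr)
  \;\geq\; D_H^\epsilon\bigl((\mathcal{E}\otimes\mathcal{F})(\rho_{AB})\,\big\Vert\,(\mathcal{E}\otimes\mathcal{F})(\mathbb{I}_A\otimes\rho_B)\bigr).
\end{equation*}
The first argument is $\tau_{A'B'}$ by definition, and since $\mathcal{F}$ is trace-preserving, $\mathcal{F}(\rho_B)=\tau_{B'}$, so the second argument is $\mathcal{E}(\mathbb{I}_A)\otimes\tau_{B'}$.

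Second, I need to replace $\mathcal{E}(\mathbb{I}_A)$ by $\mathbb{I}_{A'}$. Sub-unitality of $\mathcal{E}$ gives $\mathcal{E}(\mathbb{I}_A)\leq\mathbb{I}_{A'}$, hence $\mathcal{E}(\mathbb{I}_A)\otimes\tau_{B'}\leq\mathbb{I}_{A'}\otimes\tau_{B'}$. I then invoke a short auxiliary fact: $D_H^\epsilon(\rho\Vert\sigma)$ is antitone in $\sigma$, i.e.\ $\sigma\leq\sigma'$ implies $D_H^\epsilon(\rho\Vert\sigma)\geq D_H^\epsilon(\rho\Vert\sigma')$. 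This is immediate from the primal SDP formulation in~\eqref{eq_Depsdef}: for any feasible $Q\geq 0$, $\langle Q,\sigma\rangle\leq\langle Q,\sigma'\rangle$, so the minimum of $\langle Q,\sigma\rangle$ over the same feasible set is smaller, yielding a smaller $2^{-D_H^\epsilon(\rho\Vert\sigma)}$ for $\sigma'$ than for $\sigma$.

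Chaining the two inequalities,
\begin{equation*}
  D_H^\epsilon\bigl(\rho_{AB}\,\big\Vert\,\mathbb{I}_A\otimes\rho_B\bigr)
  \;\geq\; D_H^\epsilon\bigl(\tau_{A'B'}\,\big\Vert\,\mathcal{E}(\mathbb{I}_A)\otimes\tau_{B'}\bigr)
  \;\geq\; D_H^\epsilon\bigl(\tau_{A'B'}\,\big\Vert\,\mathbb{I}_{A'}\otimes\tau_{B'}\bigr),
\end{equation*}
negating gives the desired inequality. There is no real obstacle; the only subtlety is to realize that the standard DPI for $D_H^\epsilon$ is not enough on its own (it only handles the first argument under a joint channel), and that sub-unitality of $\mathcal{E}$ is precisely what is needed to absorb the change in the second argument via monotonicity in $\sigma$. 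Trace-preservation of $\mathcal{F}$ is used to keep the $B$-marginal structure intact.
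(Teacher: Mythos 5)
Your proof is correct, but it takes a different route from the paper's. The paper works entirely in the dual SDP: it takes the dual-optimal pair $\{\mu, X_{AB}\}$ for $H_H^\epsilon(A|B)_\rho$, pushes the dual constraint $\mu\rho_{AB}\leq\mathbb{I}_A\otimes\rho_B+X_{AB}$ through the channel, uses sub-unitality to replace $\mathcal{E}(\mathbb{I}_A)$ by $\mathbb{I}_{A'}$ inside the operator inequality, and observes that $\{\mu,\mathcal{E}\circ\mathcal{F}(X_{AB})\}$ is dual feasible for the output problem with the same objective value (trace preservation keeps $\operatorname{Tr}[X]$ fixed). You instead modularize: you invoke the already-stated DPI for $D_H^\epsilon$ as a black box to handle the channel action on both arguments, and then prove a separate antitonicity lemma ($\sigma\leq\sigma'\Rightarrow D_H^\epsilon(\rho\Vert\sigma)\geq D_H^\epsilon(\rho\Vert\sigma')$, immediate from the primal since the feasible set does not depend on $\sigma$) to absorb $\mathcal{E}(\mathbb{I}_A)\leq\mathbb{I}_{A'}$. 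Both arguments are valid; yours makes the role of sub-unitality more transparent and isolates a reusable monotonicity fact, while the paper's is self-contained at the level of a single dual-feasibility check and does not lean on the relative-entropy DPI (which the paper only cites, not proves). Two cosmetic points: the clause ``yielding a smaller $2^{-D_H^\epsilon(\rho\Vert\sigma)}$ for $\sigma'$ than for $\sigma$'' has the comparison backwards relative to the (correct) lemma you state and use, so tidy that sentence; and the identification $\mathcal{F}(\rho_B)=\tau_{B'}$ actually relies on trace preservation of $\mathcal{E}$ (so that $\operatorname{Tr}_{A'}\circ\mathcal{E}=\operatorname{Tr}_A$), not of $\mathcal{F}$.
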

\begin{proof}
  Let $\{\mu,X_{AB}\}$ be dual-optimal for $H_H^\epsilon(A\vert
  B)_\rho$. Starting from $\mu\rho_{AB}\leq
  \mathbb{I}_A\otimes\rho_B+X_{AB}$ and applying
  $\mathcal{E}\circ\mathcal{F}$ to both sides of the inequality
  yields:
  \begin{equation}
    \mu\tau_{AB}\leq \mathcal{E}(\mathbb{I}_A)\otimes\tau_{B'}+\mathcal{E}\circ\mathcal{F}(X_{AB})\leq \mathbb{I}_{A'}\otimes\tau_{B'}+\mathcal{E}\circ\mathcal{F}(X_{AB}).
  \end{equation}
  Hence, $\{\mu,\mathcal{E}\circ\mathcal{F}(X_{AB})\}$ is dual
  feasible for $H_H^\epsilon(A'\vert B')_\tau$ and
  $2^{H_H^\epsilon(A'\vert B')_\tau}\geq
  \mu-\operatorname{Tr}(\mathcal{E}\circ\mathcal{F}(X_{AB})/\epsilon)=2^{H_H^\epsilon(A\vert
    B)_\rho}$.
\end{proof}
\begin{prop}[Asymptotic Equipartition Property]
  For any $0<\epsilon < 1$, it holds that
  \begin{align}
    \lim_{n\rightarrow \infty}\tfrac{1}{n}\, H_H^{\epsilon}(A^{n}\vert
    B^{n})_{\rho^{\otimes n}}&=H(A\vert B)_\rho ,
  \end{align}
  where $H(A\vert B)$ refers to the conditional von Neumann entropy.
\end{prop}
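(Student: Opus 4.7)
The plan is to reduce this to the Asymptotic Equipartition Property already proved for $D_H^\epsilon$ in the previous proposition, together with the standard identity that relates the relative entropy to the conditional von Neumann entropy.

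First I would unfold the definition: $H_H^\epsilon(A^n|B^n)_{\rho^{\otimes n}} = -D_H^\epsilon(\rho_{AB}^{\otimes n} \,\|\, \mathbb{I}_{A^n}\otimes \rho_B^{\otimes n})$. The key observation is that $\mathbb{I}_{A^n}\otimes \rho_B^{\otimes n} = (\mathbb{I}_A\otimes \rho_B)^{\otimes n}$, so the second argument is also an $n$-fold tensor power. Setting $\rho := \rho_{AB}$ and $\sigma := \mathbb{I}_A \otimes \rho_B$ (which is a valid element of $\mathcal{P}(\mathcal{H}_{AB})$, as required by the definition of $D_H^\epsilon$), the AEP for $D_H^\epsilon$ gives directly
\begin{equation*}
  \lim_{n\to\infty}\tfrac1n\, D_H^\epsilon(\rho_{AB}^{\otimes n}\,\|\,(\mathbb{I}_A\otimes\rho_B)^{\otimes n}) \;=\; D(\rho_{AB}\,\|\,\mathbb{I}_A\otimes\rho_B).
\end{equation*}

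Next I would evaluate this limiting relative entropy. Using $\log(\mathbb{I}_A\otimes\rho_B) = \mathbb{I}_A\otimes\log\rho_B$ on the support, the definition of the (Umegaki) relative entropy unpacks as
\begin{equation*}
  D(\rho_{AB}\,\|\,\mathbb{I}_A\otimes\rho_B) \;=\; \operatorname{Tr}[\rho_{AB}\log\rho_{AB}] - \operatorname{Tr}[\rho_B\log\rho_B] \;=\; -H(AB)_\rho + H(B)_\rho \;=\; -H(A|B)_\rho,
\end{equation*}
where the last equality is the standard definition~\eqref{eq_condvN} of the conditional von Neumann entropy. Combining with the minus sign from the definition of $H_H^\epsilon$ yields the claim.

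The only subtlety worth flagging is that the AEP for $D_H^\epsilon$ was stated for states $\rho$ and operators $\sigma$; one should verify that it remains applicable when $\sigma = \mathbb{I}_A\otimes\rho_B$ is merely positive semidefinite (not subnormalized). This is immediate since the underlying semidefinite program and Stein's lemma, on which the previous proposition relies, do not require $\sigma$ to be a state — only positivity is needed. There is no genuine obstacle here: the proof is essentially a one-line reduction to the already-established scalar AEP.
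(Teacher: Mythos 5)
Your proof is correct and follows essentially the same route as the paper: both reduce the claim to the AEP for $D_H^\epsilon$ via the identity $\mathbb{I}_{A^n}\otimes\rho_B^{\otimes n}=(\mathbb{I}_A\otimes\rho_B)^{\otimes n}$ and then evaluate $D(\rho_{AB}\Vert\mathbb{I}_A\otimes\rho_B)=-H(A|B)_\rho$. Your extra remark that $\sigma=\mathbb{I}_A\otimes\rho_B$ is merely positive rather than normalized is a reasonable point of care that the paper passes over silently, but it does not change the argument.
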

\begin{proof}
  Using the asymptotic property of $D_H^\epsilon$ derived from Stein's
  lemma above, we can show for $H_H^\epsilon(A\vert B)$:
  \begin{align}
    \lim_{n\rightarrow \infty}\tfrac{1}{n} (H_H^{\epsilon}(A^{\otimes
      n}\vert B^{\otimes n})_\rho)&=\lim_{n\rightarrow
      \infty}\tfrac{1}{n} (-D_H^{\epsilon}(\rho^{\otimes
      n}\vert\vert(\mathbb{I}_A\otimes \rho_B)^{\otimes n}))
    \\
    &=-D(\rho_{AB}\vert \vert \mathbb{I}_A\otimes \rho_B)
    \\
    &=-\operatorname{Tr}\rho_{AB}(\log\rho_{AB}-\log\mathbb{I}_A\otimes\rho_B)
    \\
    &=H(AB)-\operatorname{Tr}(\rho_B\log\rho_B)
    \\
    &=H(AB)-H(B)
    \\
    &=H(A\vert B) .
  \end{align}
\end{proof}

\section{Relation to (relative) min- and max-entropies} \label{sec_smooth}

The following propositions relate the new quantities to smooth
entropies. This guarantees an operational significance for
$D_H^\epsilon$ and $H_H^\epsilon$ (see Section~\ref{sec_opapproach}).\footnote{Note that the lower bound on $D_H$ in \eqref{eq:maxvsH} is similar to Lemma 17 of~\cite{datta_strong_2011}.}

%
%
\begin{prop} {\it Let $\rho\in\mathcal{S}(\mathcal{H_{AB}})$,
    $\sigma\in\mathcal{P}(\mathcal{H_{AB}})$ and
    $0<\epsilon\leq1$. Then,}
  \begin{align}
    D_{\max}^{\sqrt {2\epsilon}}(\rho\vert\vert\sigma)&\leq D_H^{\epsilon}(\rho\vert\vert\sigma)\leq D_{\max}(\rho\vert\vert\sigma)\label{eq:maxvsH}\\
    H_{\min}^{\sqrt{2\epsilon}}(A\vert B)_\rho&\geq
    H_H^\epsilon(A\vert B)_\rho\geq H_{\min}(A\vert
    B)_{\rho\vert\rho}
  \end{align}
\end{prop}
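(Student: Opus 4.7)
The proposition bundles two statements—relative-entropy and conditional-entropy inequalities. The conditional-entropy pair will follow from the relative-entropy pair applied to $\sigma=\mathbb{I}_A\otimes\rho_B$: by definition $H_H^\epsilon(A|B)_\rho=-D_H^\epsilon(\rho_{AB}\|\mathbb{I}_A\otimes\rho_B)$ and $H_{\min}(A|B)_{\rho|\rho}=-D_{\max}(\rho_{AB}\|\mathbb{I}_A\otimes\rho_B)$, whereas $H_{\min}^\epsilon(A|B)_\rho\geq -D_{\max}^\epsilon(\rho_{AB}\|\mathbb{I}_A\otimes\rho_B)$ because the state-dependent smooth min-entropy additionally maximizes over $\sigma_B\in\mathcal{S}_\leq(\mathcal{H}_B)$. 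Thus it suffices to establish $D_{\max}^{\sqrt{2\epsilon}}(\rho\|\sigma)\leq D_H^\epsilon(\rho\|\sigma)\leq D_{\max}(\rho\|\sigma)$.

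The upper bound is direct from the primal SDP: setting $\lambda=D_{\max}(\rho\|\sigma)$, so $\rho\leq 2^\lambda\sigma$, every primal-feasible $Q$ satisfies $\operatorname{Tr}(Q\sigma)\geq 2^{-\lambda}\operatorname{Tr}(Q\rho)\geq 2^{-\lambda}\epsilon$, hence $2^{-D_H^\epsilon}\geq 2^{-\lambda}$ and $D_H^\epsilon\leq\lambda$.

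For the lower bound, let $\lambda:=D_H^\epsilon(\rho\|\sigma)$, let $Q$ be primal-optimal with $\operatorname{Tr}(Q\rho)=\epsilon$, and let $(\mu,X)$ be the associated dual-optimal pair, so that $\mu\rho\leq\sigma+X$, $\mu-\operatorname{Tr}(X)/\epsilon = 2^{-\lambda}$ (in particular $\mu\geq 2^{-\lambda}$), and complementary slackness gives $(\mu\rho-X)Q=\sigma Q$ and $QX=X$ (whence $[Q,X]=0$). I will set $\tilde\rho:=\sqrt{\mathbb{I}-Q}\,\rho\,\sqrt{\mathbb{I}-Q}$ and verify both $\tilde\rho\in\mathcal{B}_{\sqrt{2\epsilon}}(\rho)$ and $\tilde\rho\leq 2^\lambda\sigma$. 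The distance bound uses the identity $F(\sqrt{\Lambda}\rho\sqrt{\Lambda},\rho)=\operatorname{Tr}(\sqrt{\Lambda}\rho)$ valid for normalized $\rho$ and $0\leq\Lambda\leq\mathbb{I}$ (obtained from $\sqrt{\rho}\,\sqrt{\Lambda}\rho\sqrt{\Lambda}\,\sqrt{\rho}=(\sqrt{\rho}\sqrt{\Lambda}\sqrt{\rho})^2$), combined with $\sqrt{\mathbb{I}-Q}\geq\mathbb{I}-Q$; this yields $F(\tilde\rho,\rho)\geq 1-\epsilon$ and therefore $P(\tilde\rho,\rho)\leq\sqrt{1-(1-\epsilon)^2}\leq\sqrt{2\epsilon}$.

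The operator inequality $\tilde\rho\leq 2^\lambda\sigma$ is the delicate step and the main obstacle. Conjugating $\mu\rho\leq\sigma+X$ by $\sqrt{\mathbb{I}-Q}$ and using $\sqrt{\mathbb{I}-Q}\,X\,\sqrt{\mathbb{I}-Q}=0$ (since $QX=X$ and $[Q,X]=0$ imply $(\mathbb{I}-Q)X=0$) gives $\mu\tilde\rho\leq\sqrt{\mathbb{I}-Q}\,\sigma\,\sqrt{\mathbb{I}-Q}$; but one cannot conclude from this alone, because $\sqrt{\mathbb{I}-Q}\,\sigma\,\sqrt{\mathbb{I}-Q}\leq\sigma$ is not true in general. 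The plan is instead to exploit the sharper slackness relation $(\mu\rho-X)Q=\sigma Q$, which (together with its adjoint and $QX=X$) forces $(\mathbb{I}-Q)(\mu\rho-\sigma)Q=0$; hence $\mu\rho-\sigma$ commutes with $Q$ and decomposes as a direct sum of a non-negative block on the range of $Q$ and a non-positive block on its kernel. In particular the off-diagonal block vanishes, i.e.\ $\sigma_{12}=\mu\rho_{12}$ between the two spectral sectors of $Q$. A Schur-complement analysis of $2^\lambda\sigma-\tilde\rho$ in this block decomposition, combined with $\mu\geq 2^{-\lambda}$ and with the PSD-ness of $\rho$ itself, then delivers $\tilde\rho\leq 2^\lambda\sigma$. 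This last step is where the most careful bookkeeping will be required.
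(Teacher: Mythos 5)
Your reduction of the conditional-entropy inequalities to the relative-entropy ones is correct and matches the paper, and your proof of the upper bound $D_H^{\epsilon}(\rho\|\sigma)\leq D_{\max}(\rho\|\sigma)$ is fine (the paper gets it from dual feasibility of $\mu=2^{-D_{\max}}$, $X=0$; your primal argument is equivalent). The problem is the lower bound, and specifically the step you yourself flag as ``the delicate step'': it is not merely delicate, it is false for your choice of smoothed state. Take $\rho=\ket{+}\bra{+}$, $\sigma=\ket{0}\bra{0}$ and $\tfrac12<\epsilon<1$. The primal optimum is $Q=\begin{pmatrix}a&b\\ b&1-a\end{pmatrix}$ in the $\{\ket{0},\ket{1}\}$ basis with $a=\tfrac12-\sqrt{\epsilon(1-\epsilon)}$ and $b=\epsilon-\tfrac12$, so $D_H^{\epsilon}(\rho\|\sigma)=\log(\epsilon/a)<\infty$. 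Here $\mathbb{I}-Q$ is rank one (its determinant $a(1-a)-b^2$ vanishes), so $\sqrt{\mathbb{I}-Q}\propto \mathbb{I}-Q$ and $\sqrt{\mathbb{I}-Q}\,\ket{+}\propto(1-a-b)\ket{0}+(a-b)\ket{1}$ with $a-b\neq0$. Hence your $\tilde\rho=\sqrt{\mathbb{I}-Q}\,\rho\,\sqrt{\mathbb{I}-Q}$ has support sticking out of $\mathrm{supp}\,\sigma$ and $D_{\max}(\tilde\rho\|\sigma)=+\infty$. No Schur-complement bookkeeping can rescue this: the state you cut out of $\rho$ must also be steered back into the support of $\sigma$, and multiplying by a function of $Q$ alone does not do that. (Your intermediate observations --- commutation of $\mu\rho-\sigma$ with $Q$, $(\mathbb{I}-Q)X=0$ --- are correct but lead only to $\mu\tilde\rho\leq\sqrt{\mathbb{I}-Q}\,\sigma\,\sqrt{\mathbb{I}-Q}$, which, as you note, is not comparable to $\sigma$.)

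The paper avoids this by smoothing with the \emph{dual} variables rather than the primal test: it sets $G:=\sigma^{1/2}(\sigma+X)^{-1/2}$ and $\tilde\rho:=G\rho G^{\dagger}$. Conjugating $\mu\rho\leq\sigma+X$ by $G$ gives $\mu\tilde\rho\leq G(\sigma+X)G^{\dagger}=\sigma$ \emph{exactly} (and automatically lands $\tilde\rho$ in $\mathrm{supp}\,\sigma$), so $D_{\max}(\tilde\rho\|\sigma)\leq-\log\mu\leq D_H^{\epsilon}(\rho\|\sigma)$ since $\mu\geq\mu-\operatorname{Tr}[X]/\epsilon=2^{-D_H^{\epsilon}}$. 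The distance control is then Lemma~\ref{lem:aeplem} applied to $\rho\leq\sigma/\mu+X/\mu$, giving $P(\rho,\tilde\rho)\leq\sqrt{2\operatorname{Tr}[X]/\mu}\leq\sqrt{2\epsilon}$, where the last step uses $\operatorname{Tr}[X]\leq\epsilon\mu$ (nonnegativity of the dual objective). Your purified-distance computation for $\sqrt{\mathbb{I}-Q}\,\rho\,\sqrt{\mathbb{I}-Q}$ is correct as far as it goes (it is essentially Lemma~\ref{lem:unclem}), but it is attached to the wrong candidate state; that construction is the one the paper uses for the companion bound involving $D_{\min}$, not for $D_{\max}$.
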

\begin{proof}
  The upper bound for $D_H^\epsilon$ follows immediately from the fact
  that $\mu=2^{-D_{\max}(\rho\vert\vert\sigma)}$ and $X=0$ are
  feasible for $2^{-D_H^\epsilon(\rho\vert\vert\sigma)}$ in the dual
  formulation. For the lower bound, let $\mu$ and X be dual-optimal
  for $2^{-D_H^{\epsilon}(\rho\vert\vert\sigma)}$. Now define
  $G:=\sigma^{1/2}(\sigma+X)^{-1/2}$ and let $\tilde\rho:=G\rho
  G^\dagger$. It thus follows that $\mu \tilde\rho\leq \sigma$, and
  hence $2^{-D_{\max}(\tilde\rho\vert\vert\sigma)}\geq\mu$. Since
  $\operatorname{Tr}[X]\geq 0$, it holds that $\mu\geq
  2^{-D_H^{\epsilon}(\rho\vert\vert\sigma)}$, which implies that
  $2^{-D_H^{\epsilon}(\rho\vert\vert\sigma)}\leq
  2^{-D_{\max}(\tilde\rho\vert\vert\sigma)}$.

  It is now left to prove that the purified distance between
  $\tilde\rho$ and $\rho$ does not exceed $\sqrt{2\epsilon}$: For this
  we employ Lemma~\ref{lem:aeplem}, from which we obtain the upper
  bound $\sqrt{\smash[b]{\frac{2}{\mu}}\operatorname{Tr}[X]}$. Together with $0\leq
  \epsilon\mu-\operatorname{Tr}[X]$, this implies that
  $P(\rho,\tilde\rho)\leq\sqrt{2\epsilon}$, which concludes the proof.

  These bounds can now be rewritten to relate $H_H^\epsilon$ to
  $H_{\min}^\epsilon$. We have
  \begin{equation}
    H_{\min}^{\sqrt{2\epsilon}}(A\vert B)_\rho\geq -D_{\max}^{\sqrt{2\epsilon}}(\rho_{AB}\vert\vert\mathbb{I}_A\otimes\rho_B)\geq -D_H^\epsilon(\rho_{AB}\vert\vert\mathbb{I}_A\otimes \rho_B)= H_H^\epsilon(A\vert B)_\rho.
  \end{equation}
  In the other direction we find:
  \begin{equation}
    H_H^\epsilon(A\vert B)_\rho= -D_H^\epsilon(\rho_{AB}\vert
    \vert\mathbb{I}_A\otimes \rho_B)\geq 
    -D_{\max}(\rho_{AB}\vert\vert\mathbb{I}_A\otimes\rho_B)
    :=H_{\min}(A\vert B)_{\rho\vert\rho}.
  \end{equation}
\end{proof}

%
%
%
\begin{prop} {\it Let $\rho\in\mathcal{S}(\mathcal{H})$ and
    $\sigma\in\mathcal{P}(\mathcal{H})$ have intersecting support, and
    $0<\epsilon\leq1$. Then, }
  \begin{gather}
    D_{\min}(\rho\vert\vert\sigma)-\log\frac{1}{\epsilon^2}\leq D_H^{1-\epsilon}(\rho\vert\vert\sigma)\leq D_{\min}^{\sqrt{2\epsilon}}(\rho\vert\vert\sigma)-\log\frac{1}{(1-\epsilon)}
    \\
    H_{\max}(A\vert B)_{\rho}+\log\frac{1}{\epsilon^2}\geq H_H^{(1-\epsilon)}(A\vert B)_\rho
  \end{gather}
\end{prop}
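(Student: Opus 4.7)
The plan is to prove the two relative-entropy inequalities first; the conditional-entropy bound then follows by specialization. Substituting $\sigma = \id_A\otimes\rho_B$ in the (still-to-prove) lower bound $D_H^{1-\epsilon}(\rho\|\sigma) \geq D_{\min}(\rho\|\sigma)-\log(1/\epsilon^2)$ yields $H_H^{1-\epsilon}(A|B)_\rho \leq -D_{\min}(\rho_{AB}\|\id_A\otimes\rho_B)+\log(1/\epsilon^2)$, and since by definition $H_{\max}(A|B)_\rho = \max_{\sigma_B}-D_{\min}(\rho_{AB}\|\id_A\otimes\sigma_B) \geq -D_{\min}(\rho_{AB}\|\id_A\otimes\rho_B)$, the desired inequality drops out.

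For the upper bound $D_H^{1-\epsilon}(\rho\|\sigma) \leq D_{\min}^{\sqrt{2\epsilon}}(\rho\|\sigma)-\log\frac{1}{1-\epsilon}$, I would let $Q^\star$ be primal-optimal for $D_H^{1-\epsilon}$ so that $\tr(Q^\star\rho)\geq 1-\epsilon$ and $\tr(Q^\star\sigma) = (1-\epsilon)\,2^{-D_H^{1-\epsilon}(\rho\|\sigma)}$, and define the subnormalized post-measurement state $\tilde\rho := \sqrt{Q^\star}\,\rho\,\sqrt{Q^\star}$. I would then argue (i)~$\tilde\rho \in \mathcal{B}_{\sqrt{2\epsilon}}(\rho)$ and (ii)~$\|\sqrt{\tilde\rho}\sqrt\sigma\|_1^2 \leq \tr(Q^\star\sigma)$. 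For (i): fix any purification $|\psi_\rho\rangle$ of $\rho$; then $(\sqrt{Q^\star}\otimes\id)|\psi_\rho\rangle$ purifies $\tilde\rho$, and operator Jensen (concavity of $\sqrt{\,\cdot\,}$) yields $F(\rho,\tilde\rho) \geq \tr(\rho\sqrt{Q^\star}) \geq \sqrt{\tr(\rho Q^\star)} \geq \sqrt{1-\epsilon}$, hence $P(\rho,\tilde\rho)\leq\sqrt\epsilon\leq\sqrt{2\epsilon}$. For (ii): use the identity $\|\sqrt{\tilde\rho}\sqrt\sigma\|_1 = \|\sqrt\rho\sqrt{Q^\star}\sqrt\sigma\|_1$ (from $\|\sqrt A\sqrt B\|_1 = \tr\sqrt{\sqrt B\, A\sqrt B}$) together with the trace-norm Cauchy--Schwarz $\|X^*Y\|_1\leq\|X\|_2\|Y\|_2$ applied to $X=\sqrt\rho$, $Y=\sqrt{Q^\star}\sqrt\sigma$. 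Combining gives $2^{-D_{\min}^{\sqrt{2\epsilon}}(\rho\|\sigma)} \leq \tr(Q^\star\sigma) = (1-\epsilon)\,2^{-D_H^{1-\epsilon}(\rho\|\sigma)}$, which rearranges to the claim.

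For the lower bound $D_H^{1-\epsilon} \geq D_{\min}-\log(1/\epsilon^2)$, I would exhibit a specific feasible Neyman--Pearson test with a fractional boundary. Writing $F = F(\rho,\sigma)$ and setting $\lambda = \epsilon^2/F^2$, let $\Pi := \{\rho - \lambda\sigma \geq 0\}$ and $\Pi_0 := \{\rho - \lambda\sigma = 0\}$, and take $Q := \Pi - (1-\alpha)\Pi_0$ for $\alpha \in [0,1]$ tuned so that $\tr(Q\rho) = 1-\epsilon$ exactly. The identity $\tr(Q\rho) - \lambda\tr(Q\sigma) = \tr(\rho-\lambda\sigma)_+ \geq 0$ combined with the tuning gives $\tr(Q\sigma) \leq (1-\epsilon)/\lambda = (1-\epsilon)F^2/\epsilon^2$, so that $2^{-D_H^{1-\epsilon}(\rho\|\sigma)} \leq F^2/\epsilon^2 = 2^{-D_{\min}(\rho\|\sigma)+\log(1/\epsilon^2)}$, as required. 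Feasibility reduces to the non-commutative operator-Markov inequality
\[ \tr\bigl[\{\rho<\lambda\sigma\}\,\rho\bigr] \leq F\sqrt{\lambda}, \]
which in the commutative case is a one-line Markov estimate on $\sqrt{q_i/p_i}$ using $\sum_i p_i\sqrt{q_i/p_i}=F$.

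The main obstacle is precisely this operator-Markov step. The naive inequality $\{\rho<\lambda\sigma\}\,\rho\,\{\rho<\lambda\sigma\}\leq\lambda\{\rho<\lambda\sigma\}\,\sigma\,\{\rho<\lambda\sigma\}$ yields only $\tr[\{\rho<\lambda\sigma\}\,\rho]\leq\lambda\tr\sigma$, which misses the factor of $F$ entirely. I would close this gap either by a direct spectral-calculus computation in the eigenbasis of $T := \sigma^{-1/2}\rho\sigma^{-1/2}$ (exploiting $\rho = \sigma^{1/2}T\sigma^{1/2}$ together with the polar decomposition of $\sqrt\rho\sqrt\sigma$, which attains the fidelity), or by reducing to the commutative case via the Nussbaum--Szko{\l}a distributions, which preserve $F$ and admit the classical Markov bound, after which the estimate is transferred back by a data-processing argument.
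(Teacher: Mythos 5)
Your upper bound and your reduction of the conditional-entropy claim to the relative-entropy lower bound are sound and close to the paper's own route: the paper also sets $\tilde\rho=Q^{1/2}\rho Q^{1/2}$ for the primal-optimal $Q$ and bounds $P(\rho,\tilde\rho)$ via $\tr[Q\rho]=1-\epsilon$, though it obtains $\|\sqrt{\tilde\rho}\sqrt\sigma\|_1^2\leq\tr[Q\sigma]$ by exhibiting $Q$ as feasible in Watrous's SDP for the fidelity, whereas your Cauchy--Schwarz estimate $\|\sqrt\rho\sqrt{Q}\sqrt\sigma\|_1\leq\|\sqrt\rho\|_2\,\|\sqrt Q\sqrt\sigma\|_2$ is more elementary and equally valid. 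One slip in step (i): $\tr(\rho\sqrt{Q})\geq\sqrt{\tr(\rho Q)}$ is false --- Jensen for the concave function $\sqrt{\cdot}$ gives $\sum_i p_i\sqrt{a_i}\leq\sqrt{\sum_i p_i a_i}$, i.e.\ the reverse inequality (take eigenvalues $(0,1)$ with weights $(\tfrac12,\tfrac12)$). What is true, since $0\leq Q\leq\id$ implies $\sqrt Q\geq Q$, is $\tr(\rho\sqrt Q)\geq\tr(\rho Q)\geq1-\epsilon$, which still gives $P(\rho,\tilde\rho)\leq\sqrt{1-(1-\epsilon)^2}\leq\sqrt{2\epsilon}$; this is exactly what the paper's Lemma~\ref{lem:unclem} delivers.

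The genuine gap is in the lower bound, precisely where you flag it: the ``operator-Markov'' inequality $\tr[\{\rho<\lambda\sigma\}\rho]\leq F(\rho,\sigma)\sqrt{\lambda}$ is asserted but not proved, and neither of your proposed routes closes it. The Nussbaum--Szko{\l}a distributions are a converse tool: a classical test on them does not induce a quantum test on $(\rho,\sigma)$, so data processing runs the wrong way for the achievability direction you need; and the spectral calculus for $T=\sigma^{-1/2}\rho\sigma^{-1/2}$ does not obviously produce the factor $F=\tr\sqrt{\sqrt\sigma\rho\sqrt\sigma}$, which is not a spectral function of $T$ alone. The inequality is nevertheless true, and it follows from exactly the lemma the paper itself invokes in its proof of this proposition, namely $\|\sqrt A\sqrt B\|_1\geq\tfrac12\tr[A+B-|A-B|]$ (Lemma A.2.6 of~\cite{Renner2005}; equivalently the $s=\tfrac12$ case of the quantum Chernoff-bound lemma): with $A=\rho$, $B=\lambda\sigma$ this reads $\sqrt{\lambda}\,F\geq\tr\rho-\tr[(\rho-\lambda\sigma)_+]\geq\tr[\{\rho<\lambda\sigma\}\rho]$. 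Two further repairs: with $\lambda=\epsilon^2/F^2$ fixed, tuning $\alpha$ on the kernel projector $\Pi_0$ alone cannot in general achieve $\tr(Q\rho)=1-\epsilon$ (if $\rho-\lambda\sigma>0$ everywhere then $\Pi_0=0$ and $\tr(Q\rho)=1$ for every $\alpha$); take instead $Q=c\,\Pi$ with $c=(1-\epsilon)/\tr(\Pi\rho)\leq1$, which preserves $\tr[Q(\rho-\lambda\sigma)]\geq0$ and yields $\tr(Q\sigma)\leq(1-\epsilon)/\lambda$ as required. Once the Markov step is supplied by that lemma, your argument is an achievability-style rephrasing of the paper's, which instead analyses the \emph{optimal} test via complementary slackness, identifying $Q(\mu\rho-\sigma)$ and $Q^{\perp}(\mu\rho-\sigma)$ as the positive and negative parts of $\mu\rho-\sigma$ and applying the same fidelity lemma to $\|\sqrt{\mu\rho}\sqrt\sigma\|_1$; your version buys an explicit test but not a shorter proof.
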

\begin{proof}
  We begin with the lower bound for $D_H^{1-\epsilon}$. Let $\mu$, Q,
  and X be optimal for the primal and dual programs for
  $2^{-D_H^{1-\epsilon}(\rho\vert\vert\sigma)}$ and define
  $Q^\perp:=1-Q$.  Complementary slackness implies
  $\operatorname{Tr}[Q^\perp\rho]=\epsilon$, $QX=X$ and
  $Q(\mu\rho-\sigma-X)=0$. Thus,
  \begin{equation}
    Q(\mu\rho-\sigma-X)=Q(\mu\rho-\sigma)-X,
  \end{equation}
  meaning $Q(\mu\rho-\sigma)$ is hermitian and positive
  semidefinite. This implies that $Q^\perp(\mu\rho-\sigma)$ is also
  hermitian and $Q^\perp(\mu\rho-\sigma)\leq 0$. Since
  $Q+Q^\perp=\mathbb{I}$, this gives a decomposition of
  $(\mu\rho-\sigma)$ into positive and negative parts, and thus
  $\vert\mu\rho-\sigma\vert=Q(\mu\rho-\sigma)-Q^\perp(\mu\rho-\sigma)$. We
  can now proceed:
  \begin{align}
    2^{-\frac{1}{2}D_{\min}(\rho\vert\vert\sigma)}&=\left\|\sqrt{\rho}\sqrt{\sigma}\right\|_1\\
    &=\frac{1}{\sqrt\mu}\left\|\sqrt{\mu\rho}\sqrt{\sigma}\right\|_1\\
    &\geq\frac{1}{2\sqrt\mu}\operatorname{Tr}[\mu\rho+\sigma-\vert\mu\rho-\sigma\vert]\\
    &=\frac{1}{2\sqrt\mu}\operatorname{Tr}[\mu\rho+\sigma-Q(\mu\rho-\sigma)+Q^\perp(\mu\rho-\sigma)]\\
    &=\frac{1}{\sqrt{\mu}}\operatorname{Tr}[Q\sigma+\mu Q^\perp\rho]\\
    &\geq \sqrt\mu\operatorname{Tr}[Q^\perp\rho]\\
    &=\sqrt\mu\epsilon\\
    &\geq\epsilon\sqrt{\mu-\operatorname{Tr}[X]/(1-\epsilon)}\\
    &=\epsilon 2^{-\frac{1}{2}D_H^{1-\epsilon}(\rho\vert\vert\sigma)}.
  \end{align}
  We have used that $\vert\vert\sqrt A\sqrt B\vert\vert_1\geq
  \operatorname{Tr}[A+B-\vert A-B\vert]/2$ for positive semidefinite
  A, B (a variation of the trace distance bound on the fidelity; see
  Lemma~A.2.6 of~\cite{Renner2005}).

  Now we prove the upper bound. Let Q be primal-optimal for
  $2^{-D_H^{1-\epsilon}(\rho\vert\vert\sigma)}$, define
  $\tilde\rho:=Q^\frac{1}{2}\rho Q^\frac{1}{2}$, and let $\rho_{AB}$
  be an arbitrary purification of $\rho_A$. Conjugating both sides of
  $\rho_{AB}\leq \mathbb{I}$ by $Q^\frac{1}{2}$, we obtain
  $\tilde\rho_{AB}\leq Q_A\otimes \mathbb{I}_B$.

  The square of the fidelity between two subnormalized states $\zeta$ and $\eta$ can be written also in terms of an SDP, with $\zeta_{AB}$ an arbitrary purification of $\zeta_A$ \cite[Corollary 7]{watrous_semidefinite_2009}:\footnote{Note that this formulation can be brought into the standard form defined in Section~\ref{sec:sdp} by negating the objective functions and interchanging minimization with maximization.}\\

%
%
%
  \begin{minipage}[t] {0.23\textwidth}
    PRIMAL\\

    maximize\\
    subj. to\\
  \end{minipage}
  \begin{minipage}[t] {0.23\textwidth}
    \text{}\\
    \\
    $\text{Tr}[\zeta_{AB}X_{AB}]$\\
    $\text{Tr}_B[X_{AB}]=\eta_A$\\
    $X_{AB}\geq 0$

  \end{minipage}
  \begin{minipage}[t] {0.23\textwidth}
    DUAL\\

    minimize\\
    subj. to\\
  \end{minipage}
  \begin{minipage}[t] {0.23\textwidth}
    \text{}\\
    \\
    $\text{Tr}[Z\eta]$\\
    $\zeta_{AB}\leq Z_A\otimes \mathbb{I}_B$\\
    $Z\geq 0$

  \end{minipage}

  \text{}\\
  \\
  Hence, we see that $Q$ is a feasible $Z_A$ in the SDP for $\left\|\sqrt{\tilde\rho}\sqrt{\sigma}\right\|_1^2$. Hence,
  \begin{align}
    2^{-D_{\min}(\tilde\rho\vert\vert\sigma)}&=\left\|\sqrt{\tilde\rho}\sqrt{\sigma}\right\|_1^2	\\
    &\leq \operatorname{Tr}[Q\sigma]\\
    &=(1-\epsilon)2^{-D_H^{(1-\epsilon)}(\rho\vert\vert\sigma)},
  \end{align}
  and so $D_{\min}(\tilde\rho\vert\vert\sigma)\geq
  D_H^{(1-\epsilon)}(\rho\vert\vert\sigma)+\log\frac{1}{1-\epsilon}$.

  From complementary slackness we get that
  $\operatorname{Tr}[Q\rho]=1-\epsilon$. Using Lemma~\ref{lem:unclem}
  we obtain $P(\tilde\rho,\rho)\leq\sqrt{1-\operatorname{Tr}[Q\rho]^2}
  \leq\sqrt{2\epsilon}$, and the first part of the proposition
  follows.\\
\\
%
%

  Rewriting this for $H_{\max}$ and $H_H^{(1-\epsilon)}$
  yields:
  \begin{align}
    H_{\max}(A\vert B)_{\rho}&\geq H_{\max}(A\vert B)_{\rho\vert\rho}\\
    &= -D_{\min}(\rho_{AB}\vert\vert\mathbb{I}_A\otimes\rho_{B})\\
    &\geq -D_H^{1-\epsilon}(\rho_{AB}\vert\vert\mathbb{I}_A\otimes\rho_{B})-\log\frac{1}{\epsilon^2}\\
    &= H_H^{(1-\epsilon)}(A\vert B)_\rho-\log\frac{1}{\epsilon^2}
  \end{align}
\end{proof}

\section{Decomposition of Hypothesis Tests \& Entropic Chain Rules}
\label{sec:decomp}
\label{sec_chainrule}
In this section we prove a bound on hypothesis testing between arbitrary states $\rho$ and states $\sigma$ invariant under a group action, in terms of hypothesis tests between $\rho$ and its group symmetrized version $\xi$ and $\xi$ and $\sigma$. This bound yields a chain rule for the hypothesis testing entropy. 
 For a group $G$ and unitary representation $U_g$, let $\mathcal{E}_G(\rho)=\frac1{|G|}\sum_{g\in G}U_g\rho U^\dagger_g$, which is a quantum operation.  (For simplicity of presentation we assume the group is finite, but the argument applies to continuous groups as well.)

\begin{prop}
For any $\rho,\sigma\in\mathcal{S}(\mathcal{H})$ and group $G$ such that $\sigma=\mathcal{E}_G(\sigma)$, let $\xi=\mathcal{E}_G(\rho)$. Then, for $\epsilon,\epsilon'> 0$,
\begin{align}
D_H^{\epsilon+\sqrt{2\epsilon'}}(\rho||\sigma)\leq D_H^{\epsilon}(\rho||\xi)+D_H^{\epsilon'}(\xi||\sigma)+\log\frac{\epsilon+\sqrt{2\epsilon'}}{\epsilon}.
\end{align}
\end{prop}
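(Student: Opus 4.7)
The plan is to work in the dual semidefinite program formulation for $D_H^{\epsilon^*}(\rho\|\sigma)$ with $\epsilon^* := \epsilon + \sqrt{2\epsilon'}$. Setting $\beta_1 := \epsilon\cdot 2^{-D_H^\epsilon(\rho\|\xi)}$ and $\beta_2 := \epsilon'\cdot 2^{-D_H^{\epsilon'}(\xi\|\sigma)}$, the stated inequality is equivalent, after exponentiation and using strong duality (the Slater condition is easily checked, e.g.\ by taking $\mu$ small and $X$ a large multiple of the identity), to exhibiting a dual-feasible pair $(\mu,X)$ for $D_H^{\epsilon^*}(\rho\|\sigma)$ whose objective $\mu - \operatorname{Tr}(X)/\epsilon^*$ is at least $\beta_1\beta_2/(\epsilon'\epsilon^*)$; weak duality then yields $2^{-D_H^{\epsilon^*}(\rho\|\sigma)} \geq \beta_1\beta_2/(\epsilon'\epsilon^*) = 2^{-\mathrm{RHS}}$, which is the claim.

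The pair will be constructed by chaining two operator inequalities. First, let $(\mu_1, X_1)$ be dual-optimal for $D_H^\epsilon(\rho\|\xi)$, so that $\mu_1\rho \leq \xi + X_1$ with $X_1 \geq 0$, $\mu_1 \geq 0$, and by complementary slackness $\mu_1\epsilon - \operatorname{Tr}(X_1) = \beta_1$. Second, apply the bound $D_{\max}^{\sqrt{2\epsilon'}}(\xi\|\sigma) \leq D_H^{\epsilon'}(\xi\|\sigma)$ from Proposition~4.1 to obtain a subnormalized $\tilde\xi$ with $P(\tilde\xi,\xi) \leq \sqrt{2\epsilon'}$ and $(\beta_2/\epsilon')\tilde\xi \leq \sigma$. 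Since $\xi \leq \tilde\xi + (\xi-\tilde\xi)_+$ as operators, this extends to $(\beta_2/\epsilon')\xi \leq \sigma + (\beta_2/\epsilon')(\xi-\tilde\xi)_+$. Chaining these gives
\begin{equation*}
\tfrac{\beta_2}{\epsilon'}\mu_1\rho \;\leq\; \tfrac{\beta_2}{\epsilon'}(\xi + X_1) \;\leq\; \sigma + \tfrac{\beta_2}{\epsilon'}\bigl[(\xi-\tilde\xi)_+ + X_1\bigr],
\end{equation*}
so $(\mu,X) := \bigl(\mu_1\beta_2/\epsilon',\; (\beta_2/\epsilon')[(\xi-\tilde\xi)_+ + X_1]\bigr)$ is dual-feasible for $D_H^{\epsilon^*}(\rho\|\sigma)$.

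To evaluate the objective, I use $\operatorname{Tr}(\xi-\tilde\xi)_+ \leq D(\tilde\xi,\xi) \leq P(\tilde\xi,\xi) \leq \sqrt{2\epsilon'}$ together with the slackness identity $\operatorname{Tr}(X_1) = \mu_1\epsilon - \beta_1$. After substituting $\epsilon^* - \epsilon = \sqrt{2\epsilon'}$, algebra collapses $\mu - \operatorname{Tr}(X)/\epsilon^*$ to $(\beta_2/(\epsilon'\epsilon^*))\bigl[\mu_1\sqrt{2\epsilon'} + \beta_1 - \operatorname{Tr}(\xi-\tilde\xi)_+\bigr]$, which is bounded below by $\beta_1\beta_2/(\epsilon'\epsilon^*)$ as soon as $\mu_1\sqrt{2\epsilon'} \geq \operatorname{Tr}(\xi-\tilde\xi)_+$.

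The main obstacle is precisely this last comparison. It is trivial when $\tilde\xi$ can be chosen equal to $\xi$ (i.e.\ the dual variable $X_2$ for $D_H^{\epsilon'}(\xi\|\sigma)$ vanishes), and in the general case one uses the explicit construction $\tilde\xi = G\xi G^\dagger$ with $G = \sigma^{1/2}(\sigma+X_2)^{-1/2}$ from the proof of Proposition~4.1, which gives the sharper bound $P(\tilde\xi,\xi)^2 \leq 2\operatorname{Tr}(X_2)/\mu_2$ via Lemma~\ref{lem:aeplem} and allows $\operatorname{Tr}(\xi-\tilde\xi)_+$ to be controlled jointly with the free parameter $\mu_1$; the $G$-invariance of $\sigma$ enters by letting one replace the optimal $(\mu_2,X_2)$ by the most convenient dual-optimal pair. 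Once this technical step is in place, weak duality closes the argument.
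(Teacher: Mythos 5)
Your setup is fine up to the point you yourself flag as ``the main obstacle'', but that obstacle is a genuine gap, not a technical step: the inequality $\mu_1\sqrt{2\epsilon'}\geq \operatorname{Tr}(\xi-\tilde\xi)_+$ that your argument needs is false in general. The dual variable $\mu_1$ is determined by the \emph{first} hypothesis test and satisfies only $\mu_1=\bigl(\beta_1+\operatorname{Tr}(X_1)\bigr)/\epsilon$; when $X_1=0$ is dual-optimal (e.g.\ whenever $D_H^\epsilon(\rho\|\xi)=D_{\max}(\rho\|\xi)$) one has $\mu_1=2^{-D_H^\epsilon(\rho\|\xi)}$, which can be arbitrarily small, while $\operatorname{Tr}(\xi-\tilde\xi)_+=D(\xi,\tilde\xi)$ can be of order $\sqrt{2\epsilon'}$. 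The suggestion to control $\operatorname{Tr}(\xi-\tilde\xi)_+$ ``jointly with the free parameter $\mu_1$'' cannot work: $\mu_1$ is not free, and it is independent of $X_2$ and $\tilde\xi$. The structural problem is that you place the smoothing error $(\xi-\tilde\xi)_+$ into the dual slack variable $X$ of the final program, where it costs an \emph{additive} $\operatorname{Tr}[\cdot]/\epsilon^*\sim\sqrt{2\epsilon'}/\epsilon^*$ in the objective; no multiplicative correction of the form $\log\frac{\epsilon+\sqrt{2\epsilon'}}{\epsilon}$ can absorb an additive loss of that size. Relatedly, your argument never really uses the hypothesis $\sigma=\mathcal{E}_G(\sigma)$, which is a sign something is off, since the statement is not expected to hold without it.

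The paper's proof avoids this by putting the error on the state side rather than the dual-slack side: after chaining $\mu_1\mu_2\rho\leq\sigma+X_2+\mu_2X_1$, it conjugates by the contraction $T=\sigma^{1/2}(\sigma+X_2)^{-1/2}$, which removes $X_2$ exactly ($T(\sigma+X_2)T^\dagger=\sigma$) and leaves a dual-feasible pair for $D_H^\epsilon(T\rho T^\dagger\|\sigma)$ with no additive penalty. The price is that the state is now $T\rho T^\dagger$ instead of $\rho$, and this is where the group invariance is essential: choosing $X_2$ $G$-invariant makes $T$ commute with the Stinespring isometry $V$ of $\mathcal{E}_G$, so that $P(\rho,T\rho T^\dagger)=P(\bar\xi_{AR},T\bar\xi_{AR}T^\dagger)$ for the extension $\bar\xi_{AR}=V\rho V^\dagger$ of $\xi$, and Lemma~\ref{lem:aeplem} applied to $\xi\leq\sigma/\mu_2+X_2/\mu_2$ bounds this by $\sqrt{2\epsilon'}$. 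Lemma~\ref{lem:smoothing} then converts the state change into the shift $\epsilon\to\epsilon+\sqrt{2\epsilon'}$ together with the multiplicative factor $\frac{\epsilon}{\epsilon+\sqrt{2\epsilon'}}$, which is exactly the logarithmic term in the statement. If you want to salvage your approach, you must replace the step ``$\xi\leq\tilde\xi+(\xi-\tilde\xi)_+$'' by this conjugation-plus-smoothing mechanism; as written, the proof does not go through.
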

\begin{proof}
Let $\mu_1$ and $X_1$ be optimal in the dual program of $D_H^{\epsilon}(\rho||\xi)$ and, similarly, $\mu_2$ and $X_2$ be optimal in $D_H^{\epsilon'}(\xi||\sigma)$. Thus,
$\mu_1\rho\leq \xi+X_1$ and $\mu_2\xi\leq \sigma+X_2$. Observe that $X_2$ can be chosen $G$-invariant without loss of generality, since $\mu_2\xi\leq \sigma+\mathcal{E}_G(X_2)$ and $\operatorname{Tr}[X_2]=\operatorname{Tr}[\mathcal{E}_G(X_2)]$.

Chaining the inequalities gives
\begin{align}
\mu_1\mu_2\rho\leq \sigma+X_2+\mu_2 X_1.
\end{align}
Next, define $T=\sigma^{\frac12}(\sigma+X_2)^{-\frac12}$ and conjugate both sides of the above by $T$. This gives 
\begin{align}
\mu_1\mu_2T\rho T^\dagger \leq \sigma+\mu_2 TX_1T^\dagger.
\end{align}
Thus, the pair $\mu_1\mu_2$, $\mu_2TX_1T^\dagger$ is feasible for $D_H^\epsilon(T\rho T^\dagger||\sigma)$. 
Since $T$ is a contraction ($TT^\dagger\leq \mathbb{I}$), we can proceed as follows:
\begin{align}
2^{-D_H^\epsilon(T\rho T^\dagger \vert\vert\sigma)}&\geq\mu_1\mu_2-\frac{\mu_2\operatorname{Tr}[TX_1T^\dagger]}{\epsilon}\\
&\geq \mu_1\mu_2-\frac{\mu_2\operatorname{Tr} X_1}{\epsilon}\\
&=\mu_2 2^{-D_H^\epsilon(\rho||\xi)}\\
&\geq 2^{-D_H^{\epsilon'}(\xi||\sigma)}2^{-D_H^\varepsilon(\rho||\xi)}.
\label{eq:halfchainrule}
\end{align}
Now we show that $P(\rho,T\rho T^\dagger)\leq \sqrt{2\epsilon'}$, in order to invoke Lemma~\ref{lem:smoothing}. Let the isometry $V:{\mathcal{H}_A\rightarrow \mathcal{H}_A\otimes\mathcal{H}_R}$ be a Stinespring dilation of $\mathcal{E}_G$, so that
$\overline{\xi}_{AR}=V_{A\rightarrow AR}\rho_A V_{A\rightarrow AR}^\dagger=\frac1{|G|}\sum_{g,g'\in G}U_g \rho U^\dagger_{g'}\otimes\left|{g}\right\rangle\left\langle{g'}\right|$. 
The state $\overline{\xi}_{AR}$ is an extension of $\xi_A$ since $\xi_A=\operatorname{Tr}_R[\overline{\xi}_{AR}]$. Clearly $T_A\overline{\xi}_{AR}T_A^\dagger$ is an extension of $T\xi T^\dagger$. We now apply Lemma~\ref{lem:aeplem} to the inequality $\xi\leq \sigma/\mu_2+X_2/\mu_2$, noting that the contraction in the lemma is just the operator $T$, to find 
\begin{align}
	P(\bar{\xi}_{AR}, T_A \bar{\xi}_{AR} T_A^{\dagger}) &\leqslant \sqrt{\frac{\operatorname{Tr}[X_2]}{\mu_2}\left( 2 - \frac{\operatorname{Tr}[X_2]}{\mu_2} \right)}\\
	&\leq \sqrt{2 \epsilon'}.
\end{align}

This entails that
\begin{align}
	P(\rho, T\rho T^{\dagger}) &= P(V \rho_A V^{\dagger}, V T \rho T^{\dagger} V^{\dagger})\\
	&= P(V \rho_A V^{\dagger}, T V \rho V^{\dagger}T^{\dagger} )\\
	&= P(\bar{\xi}_{AR}, T_A \bar{\xi}_{AR} T_A^{\dagger})\\
	&\leqslant \sqrt{2 \epsilon'},
\end{align}
where we have used the fact that $T_A$ commutes with $V_{AR}$. This then implies that $\tfrac12||\rho - T\rho T^\dagger ||_1\leq \sqrt{2\epsilon'}$. Lemma~\ref{lem:smoothing} and (\ref{eq:halfchainrule}) then yields the proposition: 
\begin{align}
D_H^{\epsilon+\sqrt{2\epsilon'}}(\rho||\sigma)+\log\frac{\epsilon}{\epsilon+\sqrt{2\epsilon'}}&\leq
D_H^\epsilon(T\rho T^\dagger ||\sigma)\\
&\leq D_H^\epsilon(\rho||\xi)+D_H^{\epsilon'}(\xi||\sigma).
\end{align}
\end{proof}

\begin{corollary}[Chain rule for $H_H^\epsilon$]
 {\it Let $\rho_{ABC}\in\mathcal{S}(\mathcal{H})$ be an arbitrary normalized state, and $\epsilon,\epsilon'>0$. Then, }
\begin{equation}
H_H^{\epsilon+\sqrt{8\epsilon'}}(AB\vert C)_\rho\geq H^\epsilon(A\vert BC)_\rho+H^{\epsilon'}(B\vert C)_\rho-\log\frac{\epsilon+\sqrt{2\epsilon'}}{\epsilon}.
\end{equation}
\end{corollary}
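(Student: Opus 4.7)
The plan is to deduce the chain rule as a direct corollary of the Decomposition of Hypothesis Tests proposition, by choosing the group and the reference operators so that the three hypothesis-testing relative entropies appearing there match (up to scalar factors and partial traces) the three conditional entropies we want to relate. First I would pick $G$ to be the Heisenberg--Weyl group on $\mathcal H_A$ introduced in the proof of Proposition~\ref{prop:bounds}, extended to the full space by $U^j V^k \otimes \mathbb I_{BC}$. Since these operators form a unitary $1$-design, $\mathcal E_G$ is the completely depolarizing channel on $A$, so that $\mathcal E_G(\tau_{ABC}) = (\mathbb I_A/|A|)\otimes \operatorname{Tr}_A(\tau_{ABC})$ for every $\tau_{ABC}$.

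Next I set $\rho = \rho_{ABC}$ and $\sigma = \tfrac{\mathbb I_A}{|A|}\otimes \tfrac{\mathbb I_B}{|B|}\otimes \rho_C$; both are normalized, and the choice of $\sigma$ is dictated by two requirements. It must be $G$-invariant so that $\mathcal E_G(\sigma)=\sigma$ (which is immediate since $\sigma$ already has $\mathbb I_A/|A|$ on $A$), and its image under suitable rescaling must be the reference operator $\mathbb I_{AB}\otimes \rho_C$ defining $H_H^\bullet(AB|C)_\rho$. With this choice, $\xi := \mathcal E_G(\rho_{ABC}) = \tfrac{\mathbb I_A}{|A|}\otimes \rho_{BC}$ is the intermediate state that ``strips off'' the $A$ system.

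Applying the decomposition proposition gives
\begin{equation*}
D_H^{\epsilon+\sqrt{2\epsilon'}}(\rho_{ABC}\|\sigma) \leq D_H^\epsilon(\rho_{ABC}\|\xi) + D_H^{\epsilon'}(\xi\|\sigma) + \log\tfrac{\epsilon+\sqrt{2\epsilon'}}{\epsilon},
\end{equation*}
and I would translate each term using two elementary facts. The first is the rescaling identity $D_H^\epsilon(\rho\|c\sigma)=D_H^\epsilon(\rho\|\sigma)-\log c$, immediate from the SDP. The second is the tensor-factor identity $D_H^\epsilon(\tau\otimes \rho\|\tau\otimes\sigma) = D_H^\epsilon(\rho\|\sigma)$ for any normalized $\tau$: given $Q$ feasible on the right, $\mathbb I\otimes Q$ is feasible on the left (with the same objective value), and conversely given $Q$ feasible on the left, $\tilde Q := \operatorname{Tr}_1[(\tau\otimes \mathbb I)Q]$ is feasible on the right with the same value. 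Applying (i) to each of the three terms, and (ii) to $D_H^{\epsilon'}(\xi\|\sigma)=D_H^{\epsilon'}\!\bigl(\tfrac{\mathbb I_A}{|A|}\otimes \rho_{BC}\,\|\,\tfrac{\mathbb I_A}{|A|}\otimes\tfrac{\mathbb I_B}{|B|}\otimes\rho_C\bigr)$, rewrites the inequality as $-H_H^{\epsilon+\sqrt{2\epsilon'}}(AB|C)_\rho + \log|A||B| \leq -H_H^\epsilon(A|BC)_\rho - H_H^{\epsilon'}(B|C)_\rho + \log|A||B| + \log\tfrac{\epsilon+\sqrt{2\epsilon'}}{\epsilon}$. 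The dimension factors cancel, delivering the asserted chain rule after a sign flip.

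The main point to verify carefully is the tensor-factor identity for $D_H^\epsilon$, but this is a short SDP argument and should pose no difficulty. A minor cosmetic issue is that the natural smoothing parameter produced by the above calculation is $\epsilon+\sqrt{2\epsilon'}$, which is tighter than the $\epsilon+\sqrt{8\epsilon'}$ written in the corollary; the weaker version follows immediately from the fact, noted in Section~\ref{sec_entrdef}, that $H_H^\epsilon$ is monotonically increasing in $\epsilon$. All of the genuinely nontrivial content—the contraction $T$, the purified-distance estimate via Lemma~\ref{lem:aeplem}, and the final smoothing step—has already been absorbed into the proof of the decomposition proposition, so the corollary itself is little more than a translation.
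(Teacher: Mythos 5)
Your proposal is correct and follows essentially the same route as the paper: the Weyl--Heisenberg twirl on $A$, the choice $\sigma=\pi_{AB}\otimes\rho_C$ and $\xi=\pi_A\otimes\rho_{BC}$, an application of the decomposition proposition, and a translation back to conditional entropies (the paper uses the data-processing inequality to drop the $\pi_A$ factor where you prove an exact tensor-factor identity, and your explicit appeal to monotonicity in $\epsilon$ to pass from $\epsilon+\sqrt{2\epsilon'}$ to $\epsilon+\sqrt{8\epsilon'}$ is a slightly more careful handling of a point the paper glosses over).
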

\begin{proof}
Let $G$ be the Weyl-Heisenberg group representation (as in the proof of Prop~\ref{prop:bounds}) acting on $A$, for which $\mathcal{E}_G(\rho_{ABC})=\pi_A\otimes \rho_{BC}$, where $\pi_A=\mathbb{I}/{\rm dim}(\mathcal{H}_A)$. Applied to the hypothesis test between $\rho_{ABC}$ and $\pi_{AB}\otimes \rho_C$, we find
\begin{align}
&\!\!D_H^{\epsilon+\sqrt{8\epsilon'}}(\rho_{ABC}||\pi_{AB}\otimes\rho_C)\nonumber\\
&\leq D_H^{\epsilon}(\rho_{ABC}||\pi_A\otimes\rho_{BC})+D_H^{\epsilon'}(\pi_A\otimes \rho_{BC}||\pi_{AB}\otimes\rho_C)+\log\frac{\epsilon+\sqrt{2\epsilon'}}{\epsilon}\\
&\leq D_H^{\epsilon}(\rho_{ABC}||\pi_A\otimes\rho_{BC})+D_H^{\epsilon'}( \rho_{BC}||\pi_{B}\otimes\rho_C)+\log\frac{\epsilon+\sqrt{2\epsilon'}}{\epsilon}.
\end{align}
As $H_H^\epsilon(A|B)_\sigma=\log d_A-D_H^\epsilon(\sigma_{AB}||\pi_A\otimes \sigma_B)$, this is equivalent to the desired result.
\end{proof}

\section*{Acknowledgements}
We acknowledge discussions with Marco Tomamichel. Research leading to
these results was supported by the Swiss National Science Foundation
(through the National Centre of Competence in Research `Quantum
Science and Technology' and grant No. 200020-135048) and the European
Research Council (grant No. 258932).

\appendix

\section{Useful Lemmas}
 
\begin{lemma}
\label{lem:distancesdp}	
	For $\rho,\sigma\in\mathcal S_\leq(\mathcal H)$,
	\begin{align}
		\max_{0\leq P\leq \mathbb{I}}{\rm
                  Tr}[P(\rho-\sigma)]=D(\rho,\sigma) \ . 
	\end{align}
\end{lemma}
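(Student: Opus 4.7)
This is the standard variational formula for the trace distance, and I would prove it by the Jordan--Hahn decomposition. Decompose the Hermitian operator $\rho-\sigma = \Delta_+ - \Delta_-$ into its positive and negative parts, which have mutually orthogonal support, and let $\Pi$ denote the projector onto the support of $\Delta_+$. The upper bound on the LHS is then immediate: for any $0\leq P\leq \mathbb{I}$,
\begin{align*}
\operatorname{Tr}[P(\rho-\sigma)] \;=\; \operatorname{Tr}[P\Delta_+] - \operatorname{Tr}[P\Delta_-] \;\leq\; \operatorname{Tr}\Delta_+,
\end{align*}
since $P\leq\mathbb{I}$ forces $\operatorname{Tr}[P\Delta_+]\leq \operatorname{Tr}\Delta_+$ and $P\geq 0$ forces $\operatorname{Tr}[P\Delta_-]\geq 0$. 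The bound is attained by $P=\Pi$, which gives $\operatorname{Tr}[\Pi\Delta_+]=\operatorname{Tr}\Delta_+$ and $\operatorname{Tr}[\Pi\Delta_-]=0$. Hence $\max_{P}\operatorname{Tr}[P(\rho-\sigma)] = \operatorname{Tr}\Delta_+$.

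To match $\operatorname{Tr}\Delta_+$ with the generalized trace distance, combine $\|\rho-\sigma\|_1=\operatorname{Tr}\Delta_+ + \operatorname{Tr}\Delta_-$ with $\operatorname{Tr}[\rho-\sigma]=\operatorname{Tr}\Delta_+ - \operatorname{Tr}\Delta_-$ to obtain
\begin{align*}
\operatorname{Tr}\Delta_+ \;=\; \tfrac12\|\rho-\sigma\|_1 + \tfrac12\operatorname{Tr}[\rho-\sigma].
\end{align*}
Whenever $\operatorname{Tr}\rho\geq\operatorname{Tr}\sigma$ (in particular for normalized states, where both traces equal $1$), this is exactly $D(\rho,\sigma)=\tfrac12\|\rho-\sigma\|_1+\tfrac12|\operatorname{Tr}\rho-\operatorname{Tr}\sigma|$ as defined in the preliminaries, finishing the argument in that regime. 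For the application in Proposition~3.2 both states are normalized, so this regime already suffices.

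The main subtlety is the subnormalized case $\operatorname{Tr}\rho<\operatorname{Tr}\sigma$, where the Jordan--Hahn computation gives only $\operatorname{Tr}\Delta_+=\tfrac12\|\rho-\sigma\|_1-\tfrac12|\operatorname{Tr}\rho-\operatorname{Tr}\sigma|$, missing the $|\operatorname{Tr}\rho-\operatorname{Tr}\sigma|$ piece of $D$. To handle this uniformly I would pass to the canonical extensions $\hat\rho=\rho\oplus(1-\operatorname{Tr}\rho)$ and $\hat\sigma=\sigma\oplus(1-\operatorname{Tr}\sigma)$ on $\mathcal{H}\oplus\mathbb{C}$, for which $D(\rho,\sigma)=\tfrac12\|\hat\rho-\hat\sigma\|_1$, and rerun the Jordan--Hahn argument there. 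Writing a feasible extended test as $P\oplus p$ with $p\in[0,1]$, the inner maximum over $p$ contributes exactly $|\operatorname{Tr}\rho-\operatorname{Tr}\sigma|$ in the deficient case, recovering the lemma as stated. Tracking this extension cleanly is the only mild obstacle in the proof.
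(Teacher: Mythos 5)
Your core argument is the same as the paper's: the Jordan--Hahn decomposition $\rho-\sigma=\Delta_+-\Delta_-$, the upper bound $\operatorname{Tr}[P(\rho-\sigma)]\leq\operatorname{Tr}\Delta_+$ for any $0\leq P\leq\mathbb{I}$, attainment by the support projector of $\Delta_+$, and the identity $\operatorname{Tr}\Delta_+=\tfrac12\|\rho-\sigma\|_1+\tfrac12\operatorname{Tr}[\rho-\sigma]$. In the regime $\operatorname{Tr}\rho\geq\operatorname{Tr}\sigma$ this coincides with the paper's proof, and you are right that this regime covers the paper's actual uses of the lemma.

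You have also put your finger on a real defect that the paper glosses over: the reduction ``otherwise interchange the states'' is not legitimate, because the left-hand side is not symmetric under $\rho\leftrightarrow\sigma$ (interchanging turns $\operatorname{Tr}\Delta_+$ into $\operatorname{Tr}\Delta_-$) while $D(\rho,\sigma)$ is symmetric. Your computation shows that for $\operatorname{Tr}\rho<\operatorname{Tr}\sigma$ the maximum over $P$ on $\mathcal H$ equals $\tfrac12\|\rho-\sigma\|_1-\tfrac12|\operatorname{Tr}\rho-\operatorname{Tr}\sigma|<D(\rho,\sigma)$, so the stated equality actually fails there. However, your proposed repair does not ``recover the lemma as stated'': passing to $\hat\rho=\rho\oplus(1-\operatorname{Tr}\rho)$ and optimizing over tests of the form $P\oplus p$ enlarges the feasible set to operators on $\mathcal H\oplus\mathbb C$, so what you prove is $\max_{0\leq\hat P\leq\mathbb{I}}\operatorname{Tr}[\hat P(\hat\rho-\hat\sigma)]=D(\rho,\sigma)$ --- a true but different identity; no choice of $P$ on $\mathcal H$ alone can close the gap. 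The honest conclusion is that the equality holds only when $\operatorname{Tr}\rho\geq\operatorname{Tr}\sigma$, while the inequality $\max_P\operatorname{Tr}[P(\rho-\sigma)]\leq D(\rho,\sigma)$ holds unconditionally --- and that inequality is the only direction invoked in Lemma~\ref{lem:smoothing}.
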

\begin{proof}
	The proof proceeds, as in \cite[9.22]{Nielsen2000}, by showing the lefthand side is both bounded below and above by the righthand side. 
	Suppose ${\rm Tr}\rho\geq {\rm Tr}\sigma$, otherwise interchange the states. Since $\rho-\sigma$ is Hermitian, we may write $\rho-\sigma=A-B$ for $A=\{\rho-\sigma\}_+$, the positive part of $\rho-\sigma$ and $B=\{\rho-\sigma\}_-$ the nonpositive part. Since $A$ and $B$ have disjoint supports, we have $\left\|\rho-\sigma\right\|_1={\rm Tr}A+{\rm Tr}B$ and ${\rm Tr}A-{\rm Tr}B={\rm Tr}\rho-{\rm Tr}\,\sigma=|{\rm Tr}\rho-{\rm Tr}\,\sigma|$. 
 Then, for $Q$ the projector onto the support of $A$, 
 \begin{align}
 	{\rm Tr}[Q(\rho-\sigma)]&={\rm Tr}[Q(A-B)]\\
 	&={\rm Tr}[A]\\
 	&=\tfrac12||\rho-\sigma||_1+\tfrac12\left|{\rm Tr}\rho-{\rm Tr}\,\sigma\right|.
 \end{align}
 Since $Q$ is a feasible $P$ in the statement of the lemma, this establishes the lower bound. The upper bound follows since, for any feasible $P$, 
 \begin{align}
 	{\rm Tr}[P(\rho-\sigma)]&={\rm Tr}[P(A-B)]\\
 	&\leq {\rm Tr}[PA]\\
 	&\leq {\rm Tr}[A],
 \end{align}
 which is the upper bound. 
\end{proof}

\begin{lemma}
\label{lem:smoothing} 
Let $\rho,\tilde\rho\in\mathcal S_\leq(\mathcal{H})$ be such that $D(\rho,\tilde\rho)\leq \delta$ 
for some $\delta\geq 0$. Then, for any $\sigma\in\mathcal{P}(\mathcal{H})$,
\begin{equation}
D_H^{\epsilon+\delta}(\rho\vert\vert\sigma)+\log\frac{\epsilon}{\epsilon+\delta}\leq D_H^\epsilon(\tilde\rho\vert\vert\sigma).
\end{equation}
\end{lemma}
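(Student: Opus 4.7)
The plan is to exploit the fact that $D_H^\epsilon$ is defined as a minimization over a feasible set of measurement operators $Q$, and observe that closeness of $\rho$ and $\tilde\rho$ in trace distance implies that a $Q$ feasible for the $\rho$-problem at level $\epsilon+\delta$ is automatically feasible for the $\tilde\rho$-problem at level $\epsilon$. Specifically, I take $Q^\star$ to be an optimal primal solution for $2^{-D_H^{\epsilon+\delta}(\rho\Vert\sigma)}$, so that $0 \leq Q^\star \leq \mathbb{I}$, $\operatorname{Tr}[Q^\star\rho]\geq \epsilon+\delta$, and $\operatorname{Tr}[Q^\star\sigma]=(\epsilon+\delta)\,2^{-D_H^{\epsilon+\delta}(\rho\Vert\sigma)}$.

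Next, I invoke Lemma~\ref{lem:distancesdp}, which yields $\operatorname{Tr}[Q^\star(\rho-\tilde\rho)]\leq D(\rho,\tilde\rho)\leq \delta$, and hence
\begin{equation*}
\operatorname{Tr}[Q^\star\tilde\rho] \;\geq\; \operatorname{Tr}[Q^\star\rho] - \delta \;\geq\; \epsilon.
\end{equation*}
Therefore $Q^\star$ is feasible for the primal SDP defining $2^{-D_H^{\epsilon}(\tilde\rho\Vert\sigma)}$, and optimality of that problem gives $\operatorname{Tr}[Q^\star\sigma]\geq \epsilon\cdot 2^{-D_H^{\epsilon}(\tilde\rho\Vert\sigma)}$.

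Chaining these two estimates yields $(\epsilon+\delta)\,2^{-D_H^{\epsilon+\delta}(\rho\Vert\sigma)}\geq \epsilon\cdot 2^{-D_H^{\epsilon}(\tilde\rho\Vert\sigma)}$, and taking $-\log_2$ gives the stated bound. There is no real obstacle here: the only subtlety is making sure that Lemma~\ref{lem:distancesdp} applies in the subnormalized setting with the generalized trace distance $D$ (it does, since the lemma is stated precisely for $\rho,\tilde\rho\in\mathcal{S}_\leq(\mathcal{H})$), so that the one-sided bound $\operatorname{Tr}[Q^\star(\rho-\tilde\rho)]\leq \delta$ is exactly the direction we need, without any trace-normalization correction.
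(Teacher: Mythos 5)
Your proof is correct and follows essentially the same route as the paper: take the optimal primal $Q$ for $D_H^{\epsilon+\delta}(\rho\Vert\sigma)$, use Lemma~\ref{lem:distancesdp} to show it remains feasible for $D_H^{\epsilon}(\tilde\rho\Vert\sigma)$, and compare objective values. The only cosmetic difference is that you keep the feasibility constraint as an inequality $\operatorname{Tr}[Q^\star\rho]\geq\epsilon+\delta$ where the paper invokes the tight value $\operatorname{Tr}[Q\rho]=\epsilon+\delta$; both work.
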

\begin{proof}
	

Let $Q$ be primal-optimal for $D_H^{\epsilon+\delta}(\rho\vert\vert\sigma)$. It follows from Lemma~\ref{lem:distancesdp} that
\begin{align}
  \delta&\geq \max_{0\leq P\leq \mathbb{I}}\operatorname{Tr}[P(\rho-\tilde\rho)]\\
  &\geq \operatorname{Tr}[Q\rho]-\operatorname{Tr}[Q\tilde\rho]\\
  &=\epsilon+\delta-\operatorname{Tr}[Q\tilde\rho]
\end{align}
Hence, $\operatorname{Tr}[Q\tilde\rho]\geq \epsilon$ and $Q$ is primal-feasible for $D_H^\epsilon(\tilde\rho\vert\vert\sigma)$, yielding a bound of
\begin{align}
2^{-D_H^\epsilon(\tilde\rho\vert\vert\sigma)}&\leq\frac{1}{\epsilon}\operatorname{Tr}[Q\sigma]\\
&=\frac{\epsilon+\delta}{\epsilon}2^{-D_H^{\epsilon+\delta}(\rho\vert\vert\sigma)},
\end{align}
which proves the lemma.
\end{proof}

\begin{lemma}[Lemma 7, Berta \emph{et
    al.}\cite{berta_uncertainty_2010}]
  \label{lem:unclem}  For any
    $\rho\in\mathcal{S}_\leq(\mathcal{H})$, and for any nonnegative
    operator $\Pi\leq\mathbb{I}$,
  \begin{equation}
    P(\rho,\Pi\rho\Pi)\leq\frac{1}{\sqrt{\operatorname{Tr}\rho}}\sqrt{(\operatorname{Tr}\rho)^2-(\operatorname{Tr}(\Pi^2\rho))^2}
  \end{equation}
\end{lemma}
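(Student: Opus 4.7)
The plan is to compute the generalized fidelity $F(\rho,\Pi\rho\Pi)$ exactly, lower-bound it via the operator inequality $\Pi\geq\Pi^2$, and then translate the resulting bound into the stated upper bound on $P=\sqrt{1-F^2}$.

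First I would evaluate $\|\sqrt\rho\sqrt{\Pi\rho\Pi}\|_1$ using $\|A\|_1=\operatorname{Tr}\sqrt{AA^\dagger}$. With $A=\sqrt\rho\sqrt{\Pi\rho\Pi}$ one gets $AA^\dagger=\sqrt\rho\,\Pi\rho\Pi\,\sqrt\rho=(\sqrt\rho\,\Pi\,\sqrt\rho)^2$, and because $\sqrt\rho\,\Pi\,\sqrt\rho\geq 0$ this collapses to $\operatorname{Tr}(\sqrt\rho\,\Pi\,\sqrt\rho)=\operatorname{Tr}(\Pi\rho)$. Combined with $\operatorname{Tr}(\Pi\rho\Pi)=\operatorname{Tr}(\Pi^2\rho)$, the generalized fidelity formula recalled in Section~2 gives
\begin{equation*}
F(\rho,\Pi\rho\Pi)=\operatorname{Tr}(\Pi\rho)+\sqrt{(1-\operatorname{Tr}\rho)(1-\operatorname{Tr}(\Pi^2\rho))}.
\end{equation*}

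Setting $t=\operatorname{Tr}\rho$ and $b=\operatorname{Tr}(\Pi^2\rho)$, the assumption $0\leq\Pi\leq\mathbb{I}$ yields $\Pi-\Pi^2=\Pi^{1/2}(\mathbb{I}-\Pi)\Pi^{1/2}\geq 0$ and $\Pi^2\leq\mathbb{I}$, so $\operatorname{Tr}(\Pi\rho)\geq b$ and $b\leq t$. In particular
\begin{equation*}
F\geq b+\sqrt{(1-t)(1-b)}.
\end{equation*}

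The remaining task is purely algebraic. Squaring this lower bound on $F$ and substituting into $P^2=1-F^2$, the desired conclusion $P^2\leq (t^2-b^2)/t$ reduces --- after multiplying through by $t$, cancelling common terms, and isolating the square root --- to the elementary inequality
\begin{equation*}
4t^2(1-b)\geq (1-t)(b+t)^2,
\end{equation*}
which admits the transparent factorisation
\begin{equation*}
4t^2(1-b)-(1-t)(b+t)^2=(t-b)\bigl[t^2+3t+b(1-t)\bigr]\geq 0,
\end{equation*}
valid whenever $0\leq b\leq t\leq 1$ since both bracketed factors are manifestly non-negative. The edge cases $t=0$, $b=0$, and $t=1$ are checked directly. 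I expect this final algebraic rearrangement --- specifically, hitting on the factorisation that exposes non-negativity --- to be the only mildly technical step of the proof.
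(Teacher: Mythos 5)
Your proposal is correct and takes essentially the same route as the paper's proof: both evaluate the generalized fidelity exactly as $\operatorname{Tr}(\Pi\rho)+\sqrt{(1-\operatorname{Tr}\rho)(1-\operatorname{Tr}(\Pi^2\rho))}$ via $\|\sqrt{\rho}\sqrt{\Pi\rho\Pi}\|_1=\operatorname{Tr}(\Pi\rho)$, and then reduce the claim to scalar algebra using $\operatorname{Tr}(\Pi^2\rho)\leq\operatorname{Tr}(\Pi\rho)\leq\operatorname{Tr}\rho$. The only difference is in the algebraic endgame --- you eliminate $\operatorname{Tr}(\Pi\rho)$ first and square out the radical (your factorisation checks out), whereas the paper keeps it as a third variable and bounds the radical below by $1-\operatorname{Tr}\rho$ --- which is a cosmetic rather than conceptual distinction.
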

\begin{proof}
  Since
  $\vert\vert\sqrt{\rho}\sqrt{\Pi\rho\Pi}\vert\vert_1=\operatorname{Tr}\sqrt{(\sqrt{\rho}\Pi\sqrt{\rho})(\sqrt{\rho}\Pi\sqrt{\rho})}=\operatorname{Tr}(\Pi\rho)$,
  we can write the generalized fidelity as
  \begin{equation}
    \bar F(\rho,\Pi\rho\Pi)=\operatorname{Tr}(\Pi\rho)+\sqrt{(1-\operatorname{Tr}\rho)(1-\operatorname{Tr}(\Pi^2\rho))}.
  \end{equation}
  For simplicity, introduce the following abbreviations:
  $r=\operatorname{Tr}\rho$, $s=\operatorname{Tr}(\Pi\rho)$ and
  $t=\operatorname{Tr}(\Pi^2\rho)$. As $\rho\leq 1$ and $\Pi\leq 1$ trivially
  $0\leq t\leq s \leq r \leq 1$. In terms of these variables, we now
  have that
  \begin{equation}
    1-\bar F(\rho,\Pi\rho\Pi)^2 = r+t-rt-s^2-2s\sqrt{(1-r)(1-t)}.
  \end{equation}
  Since $P(\rho,\Pi\rho\Pi)=\sqrt{1-\bar F(\rho,\Pi\rho\Pi)^2}$, it is
  sufficient to show that $r(1-\bar F(\rho,\Pi\rho\Pi)^2)-r^2+t^2\leq
  0$. This we can establish:
  \begin{align}
    r(1-\bar F(\rho,\Pi\rho\Pi)^2)-r^2+t^2&=r(r+t-rt-s^2-2s\sqrt{(1-r)(1-t)})-r^2+t^2\\
    &\leq r(r+t-rt-s^2-2s(1-r))-r^2+t^2\\
    &=rt-r^2t+t^2-2rs+2r^2s-rs^2\\
    &\leq rt-r^2t+t^2-2rs+2r^2s-rt^2\\
    &=(1-r)(t^2+rt-2rs)\\
    &\leq (1-r)(s^2+rs-2rs)\\
    &=(1-r)s(s-r)\\
    &\leq 0
  \end{align}
  and the lemma follows.
\end{proof}

\begin{lemma}[Lemma 15, Tomamichel \emph{et
    al.}\cite{tomamichel_fully_2009}; Lemma 6.1~\cite{tomamichel_framework_2012}]
  \label{lem:aeplem}
  {\it Let $\rho\in\mathcal{S}(\mathcal{H})$, $\sigma\in\mathcal{P}(\mathcal{H})$, $\rho\leq \sigma+\Delta$, and $G:=\sigma^\frac{1}{2}(\sigma+\Delta)^{-\frac{1}{2}}$, where the inverse is taken on the support of $\sigma$. Furthermore, let $\left|{\psi}\right\rangle \in \mathcal{S}(\mathcal{H} \otimes \mathcal{H})$ be a purification of $\rho$. Then,} 
\begin{equation}
P(\psi, (G \otimes \mathbb{I}) \psi (G^\dagger \otimes \mathbb{I}))\leq\sqrt{\operatorname{Tr} \Delta(2- \operatorname{Tr} \Delta)}.
\end{equation}
\end{lemma}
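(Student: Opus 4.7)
The plan is to exploit that both $\psi = |\psi\rangle\langle\psi|$ (normalized) and $\tilde\psi := (G\otimes\mathbb{I})\psi(G^\dagger\otimes\mathbb{I}) = |v\rangle\langle v|$ (with $|v\rangle := (G\otimes\mathbb{I})|\psi\rangle$, sub-normalized) are rank-one operators. Because $\operatorname{Tr}\psi=1$, the ``missing mass'' term in the generalized fidelity vanishes, and a direct calculation yields
\begin{equation*}
F(\psi,\tilde\psi) \;=\; \bigl\|\sqrt{\psi}\sqrt{\tilde\psi}\bigr\|_1 \;=\; \bigl|\langle\psi|v\rangle\bigr| \;=\; \bigl|\operatorname{Tr}[G\rho]\bigr|,
\end{equation*}
where the last equality is obtained by tracing out the purifying register. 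Since $P(\psi,\tilde\psi)^2 = 1-F(\psi,\tilde\psi)^2$, the lemma reduces to the scalar inequality $|\operatorname{Tr}[G\rho]|\geq 1-\operatorname{Tr}\Delta$: squaring then yields $P^2\leq 1-(1-\operatorname{Tr}\Delta)^2 = \operatorname{Tr}\Delta(2-\operatorname{Tr}\Delta)$, as claimed.

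To establish that scalar bound, I would decompose $\rho = (\sigma+\Delta)-\Delta'$ with $\Delta' := \sigma+\Delta-\rho \geq 0$ (so that $\operatorname{Tr}\Delta' = \operatorname{Tr}\sigma+\operatorname{Tr}\Delta-1$). Since the pseudoinverse is taken on $\operatorname{supp}(\sigma)\subseteq\operatorname{supp}(\sigma+\Delta)$, one has $G(\sigma+\Delta)=\sigma^{1/2}(\sigma+\Delta)^{1/2}$, which splits
\begin{equation*}
\operatorname{Tr}[G\rho] \;=\; \operatorname{Tr}\bigl[\sigma^{1/2}(\sigma+\Delta)^{1/2}\bigr] \;-\; \operatorname{Tr}[G\Delta'].
\end{equation*}
The first term is real, and operator monotonicity of $x\mapsto\sqrt{x}$ applied to $\sigma\leq\sigma+\Delta$ yields $\sqrt{\sigma+\Delta}\geq\sqrt{\sigma}$, hence $\operatorname{Tr}[\sigma^{1/2}(\sqrt{\sigma+\Delta}-\sqrt{\sigma})]\geq 0$, giving the lower bound $\operatorname{Tr}\sigma$. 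For the second term Hölder gives $|\operatorname{Tr}[G\Delta']|\leq\|G\|_\infty\operatorname{Tr}\Delta'$, and the contraction bound $\|G\|_\infty\leq 1$ follows from anti-monotonicity of the inverse: on $\operatorname{supp}(\sigma)$, $GG^\dagger=\sigma^{1/2}(\sigma+\Delta)^{-1}\sigma^{1/2}\leq\sigma^{1/2}\sigma^{-1}\sigma^{1/2}=\Pi_\sigma$. Taking real parts and combining,
\begin{equation*}
\operatorname{Re}\operatorname{Tr}[G\rho] \;\geq\; \operatorname{Tr}\sigma \,-\, \operatorname{Tr}\Delta' \;=\; 1-\operatorname{Tr}\Delta,
\end{equation*}
so $|\operatorname{Tr}[G\rho]|\geq\operatorname{Re}\operatorname{Tr}[G\rho]\geq 1-\operatorname{Tr}\Delta$ in the regime where this lower bound is non-negative (the only regime in which the lemma's right-hand side is informative).

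The main obstacle is that $\operatorname{Tr}[G\rho]$ is genuinely complex-valued whenever $[\sigma,\Delta]\neq 0$, since then $\sigma^{1/2}$ and $(\sigma+\Delta)^{-1/2}$ fail to commute and $G$ is not self-adjoint. In the commutative case one can mimic the scalar inequality $\sqrt{s/(s+d)}\geq 1-d/(s+d)$ directly, but there is no clean operator analogue, and Cauchy--Schwarz on $\operatorname{Tr}[G\rho]=\operatorname{Tr}[\sqrt{\rho}\,G\sqrt{\rho}\,]$ only gives an upper bound. The decisive trick is therefore to pass to $\operatorname{Re}\operatorname{Tr}[G\rho]$ and peel off the real, positive contribution $\operatorname{Tr}[\sigma^{1/2}(\sigma+\Delta)^{1/2}]$ from $G(\sigma+\Delta)$, whose operator-monotonicity lower bound $\operatorname{Tr}\sigma$ is exactly large enough to absorb the Hölder error on $\operatorname{Tr}[G\Delta']$. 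This converts the commutative calculation into an operator one while using nothing beyond monotonicity of $\sqrt{\cdot}$ and anti-monotonicity of the inverse.
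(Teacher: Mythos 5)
Your proof is correct and follows essentially the same route as the paper's: reduce the purified distance to a lower bound on $\operatorname{Re}\operatorname{Tr}[G\rho]$, then use $\rho\leq\sigma+\Delta$, the contraction property $\|G\|_\infty\leq 1$, and operator monotonicity of the square root via $\operatorname{Tr}[\sigma^{1/2}(\sigma+\Delta)^{1/2}]\geq\operatorname{Tr}\sigma$. Your splitting $\rho=(\sigma+\Delta)-\Delta'$ with a H\"older bound on $\operatorname{Tr}[G\Delta']$ is just a reorganization of the paper's step $\operatorname{Tr}[(\mathbb{I}-\bar G)\rho]\leq\operatorname{Tr}[(\mathbb{I}-\bar G)(\sigma+\Delta)]$ with $\bar G=\tfrac12(G+G^\dagger)$, so the two arguments are the same in substance.
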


\begin{proof}
  Let $\left|{\psi}\right>\in \mathcal{S}(\mathcal{H}\otimes\mathcal{H})$ be a purification of $\rho$. Then,
  $(G\otimes\mathbb{I})\left|{\psi}\right>$ is a
  purification of $G\rho G^\dagger$, and with the help of Uhlmann's
  theorem we can bound the fidelity:
  \begin{align}
F(\psi,(G \otimes \mathbb{I}) \psi (G^\dagger \otimes \mathbb{I})) &= \vert\left<{\psi}\right| G\otimes\mathbb{I}\left|{\psi}\right>\vert\\
&\geq\mathcal{R}\{\operatorname{Tr}(G\rho)\}=\operatorname{Tr}(\bar G\rho),
  \end{align}
  with $\bar G := \frac{1}{2}(G+G^\dagger)$. 
  Since G is a contraction\footnote{One can see this by conjugating both sides
    of $\sigma\leq\sigma+\Delta$ by $(\sigma+\Delta)^{-1/2}$, which gives
    $G^\dagger G\leq 1$}, $\vert\vert G\vert\vert\leq 1$. Also,
  $\vert\vert \bar G\vert\vert\leq 1$ by the triangle inequality and
  thus $\operatorname{Tr}(\bar G\rho_{AB})\leq 1$. Furthermore,
  \begin{align}
1-\operatorname{Tr}(\bar G \rho)&=\operatorname{Tr}((\mathbb{I}-\bar G)\rho)\\
&\leq \operatorname{Tr}(\sigma+\Delta)-\operatorname{Tr}(\bar G(\sigma+\Delta))\\
&=\operatorname{Tr}(\sigma+\Delta)-\operatorname{Tr}((\sigma+\Delta)^\frac{1}{2}(\sigma)^\frac{1}{2})\\
&\leq\operatorname{Tr}(\Delta),
  \end{align}
where we have used $\rho\leq\sigma+\Delta$ and $\sqrt{\sigma+\Delta}\geq\sqrt{\sigma}$. Then we find 
\begin{align}
P(\psi, (G \otimes \mathbb{I}) \psi (G^\dagger \otimes \mathbb{I}))&=\sqrt{1-F(\psi, (G \otimes \mathbb{I}) \psi (G^\dagger \otimes \mathbb{I}))^2}\\
&\leq \sqrt{1-(1-\operatorname{Tr}(\Delta)^2)}\\
&=\sqrt{\operatorname{Tr}\Delta(2-\operatorname{Tr}\Delta)}.
\end{align}
\end{proof}

\bibliographystyle{ieeetr}
\bibliography{hypotesting}

\end{document}